\tikzstyle{myStyle}=[shape = circle, minimum size = 2pt, inner sep =1.5pt, outer sep = 0pt, draw, fill=white]
\newcommand{\ab}[1]{\left|#1\right|} 
\newcommand{\algprobm}[1]{\textsc{#1}\xspace}
\newcommand{\stg}{\operatorname{\#ST}}
\newcommand{\mps}{\operatorname{\#MPS}}
\newcommand{\setsum}{\operatorname{sum}}
\newcommand{\compset}{\mathcal{B}'}
\theoremstyle{plain}
\newtheorem{theorem}{Theorem}[section]
\newtheorem{proposition}[theorem]{Proposition}
\newtheorem{corollary}[theorem]{Corollary}
\newtheorem{lemma}[theorem]{Lemma}
\newtheorem{observation}[theorem]{Observation}
\newtheorem*{thm:MainCounting}{Theorem~\ref{thm:MainCounting}}
\newtheorem*{thm:MainMedian}{Theorem~\ref{thm:MainMedian}}
\theoremstyle{definition}
\newtheorem{definition}[theorem]{Definition}
\newtheorem{remark}[theorem]{Remark}
\newtheorem{question}[theorem]{Question}
\newtheorem*{thm:MainFPT}{Theorem~\ref{thm:MainFPT}}
\newcommand{\p}[1]{(#1)}
\newcommand{\pb}[1]{\left(#1\right)}
\newcommand{\st}[1]{\left\{#1\right\}}
\DeclareMathOperator{\poly}{poly}
\title{Pairwise Rearrangement is Fixed-Parameter Tractable in the Single Cut-and-Join Model}
\thanks{We wish to thank the American Mathematical Society for organizing the Mathematics Research Community workshop where this work began. We also wish to thank the anonymous referees for their helpful feedback. This material is based upon work supported by the National Science Foundation under Grant Number DMS 1641020. A preliminary version appeared in the proceedings of SWAT 2024 \cite{BaileySWAT}.} 
\author[Bailey et. al.]{Lora Bailey}
\address[Bailey]{Department of Mathematics, Grand Valley State University, Allendale, MI}
\email{baileylo@gvsu.edu}
\author[]{Heather Smith Blake}
\address[Blake]{Department of Mathematics and Computer Science, Davidson College, Davidson, NC}
\email{hsblake@davidson.edu}
\author[]{Garner Cochran}
\address[Cochran]{Department of Mathematics and Computer Science, Berry College, Mount Berry, GA}
\email{gcochran@berry.edu}
\author[]{Nathan Fox}
\address[Fox]{Department of Quantitative Sciences, Canisius University, Buffalo, NY}
\email{fox42@canisius.edu}
\author[]{Michael Levet$^*$}
\address[Levet]{Department of Computer Science, College of Charleston, Charleston, SC}
\email{levetm@cofc.edu}
\author[]{Reem Mahmoud}
\address[Mahmoud]{Division of Science, NYU Abu Dhabi, Abu Dhabi, UAE  (\textup{The majority of the work was completed while at} Department of Computer Science, Virginia Commonwealth University, Richmond, VA)}
\email{rm7230@nyu.edu}
\author[]{Inne Singgih}
\address[Singgih]{Department of Mathematical Sciences, University of Cincinnati, Cincinnati, OH}
\email{inne.singgih@uc.edu}
\author[]{Grace Stadnyk}
\address[Stadnyk]{Department of Mathematics, Furman University, Greenville, SC}
\email{grace.stadnyk@furman.edu}
\author[]{Alexander Wiedemann}
\address[Wiedemann]{Department of Mathematics and Computational Data Science, Hamline University, Saint Paul, MN  (\textup{The majority of the work was completed while at} Department of Mathematics, Randolph-Macon College, Ashland, VA)}
\email{awiedemann01@hamline.edu}
\begin{document}
\let\thefootnote\relax\footnotetext{$^*$Corresponding author}

\begin{abstract}
Genome rearrangement is a common model for molecular evolution. In this paper, we consider the \algprobm{Pairwise Rearrangement} problem, which takes as input two genomes and asks for the number of minimum-length sequences of permissible operations transforming the first genome into the second. In the Single Cut-and-Join model (Bergeron, Medvedev, \& Stoye, \textit{J. Comput. Biol.} 2010), \algprobm{Pairwise Rearrangement} is $\#\textsf{P}$-complete (Bailey, et. al., COCOON 2023; \textit{Theor. Comput. Sci.} 2024), which implies that exact sampling is intractable. In order to cope with this intractability, we investigate the parameterized complexity of this problem. We exhibit a fixed-parameter tractable algorithm with respect to the number of components in the \emph{adjacency graph} that are not cycles of length $2$ or paths of length $1$. As a consequence, we obtain that \algprobm{Pairwise Rearrangement} in the Single Cut-and-Join model is fixed-parameter tractable by distance. Our results suggest that the number of nontrivial components in the adjacency graph serves as the key obstacle for efficient sampling.
\end{abstract}

\maketitle

\section{Introduction}

With the natural occurrence of mutations in genomes and the wide range of effects this can incite, scientists seek to understand the evolutionary relationship between species. Several discrete mathematical models have been proposed to model these mutations based on biological observations. Genome rearrangement models consider situations in which large-scale mutations alter the order of the genes within the genome. Sturtevant~\cite{Sturtevant1917, Sturtevant1931} observed the biological phenomenon of genome rearrangement in the study of strains of \emph{Drosophila} (fruit flies), only a few years after he produced the first genetic map~\cite{Sturtevant1913}. Palmer~\&~Herbon~\cite{PalmerHerbon} observed similar phenomenon in plants. McClintock~\cite{mcclintock1951chromosome} also found experimental evidence of genes rearranging themselves, or ``transposing'' themselves, within chromosomes.  

Subsequent to his work on \textit{Drosophila}, Sturtevant together with Novitski~\cite{SturtevantNovitski} introduced one of the first genome rearrangement problems, seeking a minimum length sequence of operations, called a \emph{scenario}, 
that would transform one genome into another. In investigating these questions, it is of key importance to balance biological relevance with computational tractability. One central issue is that of combinatorial explosion: the number of scenarios even between small genomes may be too large to handle, making it difficult to test hypotheses on all possible scenarios. 
Thus, we desire a polynomial time algorithm to uniformly sample from the rearrangement scenarios. Since uniform sampling is no harder than exact enumeration~\cite{JERRUM1986169}, we investigate the computational complexity of the \algprobm{Pairwise Rearrangement} problem (Definition~\ref{def:DistancePairwiseRearrangement}) which asks for the number of minimum-length scenarios transforming one genome into another.  

The \algprobm{Pairwise Rearrangement} problem has received significant attention in several genome rearrangement models.
\algprobm{Pairwise Rearrangement} is known to be in $\textsf{FP}$ for the Single Cut or Join model~\cite{MiklosKissTannier}, but is conjectured to be \textsf{\#P}-complete for the Double Cut-and-Join model~\cite{MiklosTannierDCJFPRAS}. The Single Cut-and-Join model sits between these two models. Recently, Bailey, et al.~\cite{bailey2023complexity}, showed that \algprobm{Pairwise Rearrangement} is $\textsf{\#P}$-complete in the Single Cut-and-Join model. However, in practice, the key structures that serve as obstacles to efficient sampling may not necessarily appear. In particular, the relevant obstacles for efficient sampling in the Single Cut-and-Join model remain opaque. 

\noindent \\ \textbf{Main Results.} In this paper, we investigate the \algprobm{Pairwise Rearrangement} problem in the Single Cut-and-Join model through the lens of parameterized complexity. Our main result is the following.

\begin{theorem} \label{thm:MainFPT}
In the Single Cut-and-Join model, \algprobm{Pairwise Rearrangement} is fixed-parameter tractable with respect to the number of components in the adjacency graph (see Definition~\ref{def:adjgraph}) that are not \textit{trivial} (cycles of length $2$ or paths of length $1$).
\end{theorem}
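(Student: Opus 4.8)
The plan is to set up the problem so that the count of minimum-length scenarios factors over the connected components of the adjacency graph, and then to show that each non-trivial component contributes a bounded amount of ``combinatorial freedom'' while each trivial component contributes a known closed-form multiplicity. First I would recall the standard structure theory of the Single Cut-and-Join model: a scenario transforming genome $A$ into genome $B$ acts on the adjacency graph whose vertices are the gene-extremities of $A$ (equivalently $B$) and whose edges record adjacencies in $A$ and in $B$; the minimum number of operations (the distance) is determined component-by-component by a formula counting cycles and odd/even paths. The key first step is a \emph{decomposition lemma}: because each cut-and-join operation is local to a single component and components evolve independently, the set of all minimum-length scenarios is (up to interleaving) a product over components, so $\mps(A,B) = \binom{d}{d_1,\dots,d_k}\prod_i N_i$, where $d_i$ is the distance contributed by component $C_i$, $d = \sum_i d_i$, and $N_i$ is the number of minimum-length scenarios that sort $C_i$ in isolation. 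The multinomial coefficient accounts for shuffling the operations of different components together into one global sequence and is computable in polynomial time; the real content is in the $N_i$'s.

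The second step is to pin down $N_i$ for trivial components: a $2$-cycle or a length-$1$ path is already sorted (or sorted in a single forced operation), so $N_i = 1$ and $d_i$ is $0$ or $1$ — these contribute nothing to the parameter and only a trivial factor to the product, which is exactly why the parameterization excludes them. The third and main step is to bound $N_i$ for a non-trivial component $C_i$ \emph{uniformly in the size of the genome}. Here I would argue that, although a non-trivial component can be arbitrarily large, the number of distinct minimum-length sorting scenarios on a single path or cycle is governed by a recurrence on its length, and more importantly that for the purpose of an FPT algorithm we do not need a closed form: we can compute $N_i$ by brute-force search over all minimum-length scenarios on $C_i$ in time depending only on $|C_i|$ — but $|C_i|$ is \emph{not} bounded by the parameter. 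So the crux is to show that either (a) $N_i$ depends only on the length of $C_i$ via an explicitly summable formula (e.g.\ Catalan-type or a product of small factors), computable in $\poly(|C_i|)$ time, so that the total running time is $\poly(n)$ times a function of the \emph{number} $k$ of non-trivial components only through the multinomial bookkeeping; or (b) more carefully, that each non-trivial component's contribution can be computed in polynomial time and the only place the parameter $k$ enters is in assembling the global count. In either formulation the running time is $f(k)\cdot\poly(n)$, which is the definition of fixed-parameter tractability.

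The hard part will be establishing the per-component count $N_i$ for the non-trivial components in a way that is simultaneously (i) provably correct against the Single Cut-and-Join distance formula of Bergeron--Medvedev--Stoye, and (ii) efficiently computable. For cycles and for the two relevant path parities, I expect $N_i$ to satisfy a clean recurrence in the component length obtained by conditioning on which of the finitely many ``first moves'' is taken and how it splits the component; solving or bounding that recurrence, and checking that each admissible first move indeed preserves the distance (i.e.\ lies on a geodesic), is where the technical care is needed. Once that is in hand, the algorithm is: (1) build the adjacency graph and find its components in linear time; (2) identify the $k$ non-trivial ones and verify $k$ is the parameter; (3) compute $d_i$ and $N_i$ for every component using the distance formula and the recurrence; (4) output $\binom{d}{d_1,\dots,d_k}\prod_i N_i$. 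Correctness follows from the decomposition lemma plus the per-component count, and the total time is bounded by a polynomial in $n$ together with a function of $k$ coming only from the combinatorics of merging $k$ independent operation-streams, giving Theorem~\ref{thm:MainFPT}.
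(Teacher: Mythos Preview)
Your proposal rests on a decomposition lemma that is false in the Single Cut-and-Join model. You claim that ``each cut-and-join operation is local to a single component and components evolve independently,'' so that $\mps(A,B) = \binom{d}{d_1,\dots,d_k}\prod_i N_i$. But a cut-join operation takes an adjacency $ab$ and a telomere $c$, and nothing forces these to lie in the same component of the adjacency graph. Indeed, Observation~\ref{obs:SortingScenarios}(f)--(h) list three \emph{two-component} sorting operations: cutting an adjacency in an $M$-shaped component and joining to a telomere of a $W$-shaped component, cutting a nontrivial crown and joining to an $N$-shaped component, and cutting a nontrivial crown and joining to a $W$-shaped component. These are genuine sorting (distance-decreasing) operations that merge two components, and they produce minimum-length scenarios that your multinomial-times-product formula misses entirely. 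So steps (3)--(4) of your algorithm undercount $\mps$, and the proof does not go through.

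This interaction between components is precisely why the paper's argument is substantially more involved. Instead of a single partition into components, the paper sums over \emph{all} partitions $\pi$ of the nontrivial components (Lemma~\ref{lem:partcount}), where each part $\pi_i$ is a set of components that ``sort together'' via such cross-component operations; each part contributes a factor $\stg(\pi_i)$ rather than a per-component $N_i$. Theorem~\ref{thm:invalidsort} then classifies exactly which parts can have $\stg(\pi_i)>0$: at most one $M$, at most one $W$, at most one $N$ (and no other path if an $N$ is present), plus any number of nontrivial crowns. The heart of the FPT argument is a system of recurrences (Propositions~\ref{prop:recn}--\ref{prop:recmw}) for $\stg_N(\mathcal{C},\eta)$, $\stg_W(\mathcal{C},w)$, $\stg_C(\mathcal{C})$, $\stg_{MW}(\mathcal{C},m,w)$ indexed by a \emph{sub-multiset} $\mathcal{C}$ of the crown sizes, filled by dynamic programming over the $2^{|\mathcal{C}|}$ subsets in time $O(k\cdot 2^k\cdot n^2)$, followed by a sum over the $B_k$ partitions. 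The exponential dependence on $k$ is unavoidable here exactly because crowns can attach to any of the other parts; your per-component product would have eliminated it, but only because it ignores these scenarios.
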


Our parameterization in Theorem~\ref{thm:MainFPT} has biological significance. Indeed, chromoanagenesis is a carcinogenic mechanism that involves massive chromosomal rearrangements, which may lead to fewer nontrivial components in the adjacency graph~\cite{HollandCleveland}.

The \textit{adjacency graph} (see Definition~\ref{def:adjgraph}) is a bipartite multigraph illustrating where two genomes differ. Bergeron, Medvedev, and Stoye~\cite{BergeronMedvedevStoye} established a precise relationship between the adjacency graph and the distance between two genomes in the Single Cut-and-Join model. The operations in this model induce structural changes on the adjacency graph~\cite[Observation~2.7]{bailey2023complexity}. We leverage this crucially to establish Theorem~\ref{thm:MainFPT}. Our precise technique involves developing a dynamic programming algorithm that, when the number of nontrivial components is bounded, the corresponding lookup table only has a polynomial number of entries. This establishes our claim of polynomial-time computation. We stress that arriving at the recurrence relations for the dynamic programming algorithm and proving their correctness is technical and nontrivial. Indeed, this is not surprising, as \algprobm{Pairwise Rearrangement} is $\textsf{\#P}$-complete~\cite{bailey2023complexity}. While our work is theoretical in nature, the algorithm underlying Theorem~\ref{thm:MainFPT} in fact yields an efficient practical implementation (see \href{https://github.com/nhf216/scaj-fpt}{GitHub}).

We also note that if the distance (Equation~(\ref{eq:distance})) between the two genomes is bounded~\cite{BergeronMedvedevStoye}, then so is the number of nontrivial components in the adjacency graph. As a consequence, we obtain the following corollary:

\begin{corollary}\label{cor:distparam}
In the Single Cut-and-Join model, \algprobm{Pairwise Rearrangement} is fixed-parameter tractable with respect to the distance between two genomes.    
\end{corollary}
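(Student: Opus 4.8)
The plan is to derive Corollary~\ref{cor:distparam} directly from Theorem~\ref{thm:MainFPT} by showing that the parameter used there---the number of nontrivial components of the adjacency graph---is bounded by a function of the distance $d$ between the two genomes. The key fact, which is available from Bergeron, Medvedev, and Stoye~\cite{BergeronMedvedevStoye} and is referenced in the excerpt via Equation~(\ref{eq:distance}), is that the Single Cut-and-Join distance equals a quantity of the form $d = (\text{number of cycles not of length }2) + (\text{number of odd paths})/2$ or an equivalent expression in terms of the combinatorial type of the adjacency graph. First I would recall this distance formula precisely, identify which components contribute strictly positively to $d$, and observe that every \emph{nontrivial} component (a cycle of length $\neq 2$ or a path of length $\neq 1$) contributes at least $1/2$ (in fact at least a positive constant) to the distance.

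From that observation the bound is immediate: if $k$ denotes the number of nontrivial components, then $d \geq c \cdot k$ for some absolute constant $c > 0$, hence $k \leq d/c = O(d)$. Plugging this into Theorem~\ref{thm:MainFPT}, whose running time is of the form $f(k)\cdot \poly(n)$ for some computable function $f$, yields a running time $f(O(d))\cdot \poly(n)$, which is exactly the form required for fixed-parameter tractability with respect to $d$. I would also note that $d$ itself is computable in polynomial time from the two input genomes (again by~\cite{BergeronMedvedevStoye}), so the reduction incurs no hidden cost, and that the parameter $k$ can likewise be read off the adjacency graph in polynomial time, so the whole pipeline is legitimately an FPT algorithm parameterized by $d$.

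The only real subtlety---and the step I would be most careful about---is confirming that trivial components genuinely contribute nothing to the distance while every nontrivial component contributes a bounded-below positive amount, so that the inequality $d \geq c\cdot k$ holds with no exceptional cases. This requires checking the distance formula componentwise: a $2$-cycle and a path of length $1$ must each contribute $0$, and one must verify there is no degenerate nontrivial component (e.g., an isolated vertex or an empty path, depending on the conventions in Definition~\ref{def:adjgraph}) that contributes $0$ despite being counted as nontrivial. Once the bookkeeping of the distance formula is pinned down against the precise definition of ``trivial'' used in Theorem~\ref{thm:MainFPT}, the corollary follows with essentially no further work; it is a short consequence rather than an independent argument.
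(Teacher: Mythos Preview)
Your proposal is correct and is exactly the approach the paper takes: the paper simply remarks (just before the corollary) that if the distance in Equation~(\ref{eq:distance}) is bounded then so is the number of nontrivial components, and invokes Theorem~\ref{thm:MainFPT}. Your componentwise check would in fact show each nontrivial component contributes at least $1$ to $d$ via Equation~(\ref{eq:SubsetDistance}), giving $k\le d$; just be aware that the precise distance formula is Equation~(\ref{eq:distance}) rather than the tentative one you wrote, but since you planned to look it up this is not a gap.
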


\begin{remark}
Using Theorem~\ref{thm:MainFPT}, we also obtain a fixed-parameter tractable algorithm to uniformly sample sorting scenarios. See Corollary~\ref{cor:Sampling}.
\end{remark}

To the best of our knowledge, parameterized complexity has received minimal attention in the genome rearrangement literature. For instance, a fixed-parameter tractable algorithm (parameterized by the number of components in the adjacency graph) for \algprobm{Pairwise Rearrangement} in the Double Cut-and-Join model can easily be deduced from the work of~\cite{BragaStoyeDCJ}, though the authors do not explicitly investigate the parameterized complexity of this problem.  Thus, beyond providing a means of coping with the intractability of \algprobm{Pairwise Rearrangement} in the Single Cut-and-Join model, Theorem~\ref{thm:MainFPT} (together with the results of~\cite{BragaStoyeDCJ}) makes precise that the number of nontrivial components in the adjacency graph serves as a key obstacle towards efficient sampling, across multiple models of genome rearrangement. 

In contrast, there has been significant algorithmic work on approximation and sampling (see, for instance,~\cite{MiklosKissTannier, MiklosSmithSamplingCounting, DarlingMiklosRagan, DurrettNielsenYork, MiklosTannier, LargetSimonKadaneSweet}), to cope with the intractability of enumeration. To the best of our knowledge, such approaches have not been fruitful against the Reversal model, for which the complexity of \algprobm{Pairwise Rearrangement} is a longstanding open problem. We are not aware of any work on approximate counting or sampling for \algprobm{Pairwise Rearrangement} in the Single Cut-and-Join model.

\section{Preliminaries} \label{sec:GenomeRearrangement}

We recall preliminaries regarding genome rearrangement. 

\begin{definition}
    A \emph{genome} is an edge-labeled directed graph in which each label is unique and the total degree of each vertex is 1 or 2 (in-degree and out-degree combined). In particular, a genome consists of disjoint paths and cycles. The weak components of a genome we call \emph{chromosomes}. Each edge begins at its \emph{tail} and ends at its \emph{head}, collectively referred to as its \emph{extremities}.  Degree $2$ vertices are called \emph{adjacencies}, and degree $1$ vertices are called \emph{telomeres}. {See Figure~\ref{adjacency-fig}.}
\end{definition}

\begin{figure}[!ht]
\centering
\begin{tikzpicture}
\draw[ultra thick, >=stealth] (0,1) edge[<-] (1,0) (1,0) edge[->] (2,1) (2,1) edge[->] (3,0);
\draw[ultra thick, >=stealth] (4,0) edge[<-] (6,0) (6,0) edge [->] (5,1) (5,1) edge [<-] (4,0);
\draw (0.7,0.8) node {$X_1$} (1.8,0.2) node {$X_2$} (2.8,0.8) node {$X_3$} (4.2,0.8) node {$X_4$} (5.8,0.8) node {$X_5$} (5.1,-0.4) node {$X_6$};
\draw (-1,1) edge[->, >=stealth] (-0.1,1) (-2.2,1) node {telomere \footnotesize{$X_1^h$}} (6.1,0) edge[<-, >=stealth] (7,0) (8.5,0) node {adjacency \footnotesize{$X_5^tX_6^t$}};
\draw[thick, decorate, decoration={calligraphic brace, amplitude=3mm}] (-0.1,1.2) -- (3.1,1.2) (1.5,1.7) node {linear chromosome}; 
\draw[thick, decorate, decoration={calligraphic brace, amplitude=3mm}] (3.9,1.2) -- (6.1,1.2) (5,1.7) node {chromosome} (5,2) node {circular};
\draw[thick, decorate, decoration={calligraphic brace, mirror, amplitude=3mm}] (0,-0.5) -- (6,-0.5) (3,-1) node {Genome};
\draw[thick, decorate, decoration={calligraphic brace, mirror, amplitude=3mm}] (-0.2,0.8) -- (0.8,-0.2) (-0.1, 0) node[rotate=-45] {gene};
\end{tikzpicture}
\caption{An edge-labeled genome~\cite[Fig.~1]{bailey2023complexity}.} 
\label{adjacency-fig}
\end{figure}

Adjacencies can be viewed as sets of two extremities, and telomeres as sets containing exactly one extremity. For simplicity, we write adjacency $\{a,b\}$ as $ab$ or $ba$ and telomere $\{c\}$ as $c$.  For example, the adjacency $X_{5}^{t} X_{6}^{t}$ in Figure~\ref{adjacency-fig} denotes that the tail of the edge $X_{5}$ and the tail of the edge $X_{6}$ meet, and the telomere $X_1^h$ is where the edge $X_{1}$ ends. Each genome is then uniquely defined by its set of adjacencies and telomeres.

Consider the following operations on a given genome: 
\begin{enumerate}[itemsep=0pt]
    \item[ (i)] \emph{Cut}: an adjacency $ab$ is separated into two telomeres, $a$ and $b$,
    \item[ (ii)] \emph{Join}: two telomeres $a$ and $b$ become one adjacency, $ab$,
    \item[ (iii)] \emph{Cut-join}:   adjacency $ab$ and telomere $c$ are replaced with adjacency $ac$ and telomere $b$, and
    \item[ (iv)] \emph{Double-cut-join}:  adjacencies $ab$ and $cd$ are replaced with adjacencies $ac$ and $bd$.
\end{enumerate}

\begin{figure}[h!]
\centering
\begin{subfigure}{0.7\textwidth}
\centering
\begin{tikzpicture}
\draw[ultra thick, >=stealth, xshift=-0.5cm] (0,1) edge[<-] (1,0) (1,0) edge[->] (2,1) (2,1) edge[->] (3,0);
\draw[xshift=-0.5cm] (0.2,0.2) node {$X_1$} (1.8,0.2) node {$X_2$} (2.8,0.8) node {$X_3$};
\draw (3.5,0.5) edge [->, >=stealth] (5.5,0.5) (4.5,0.2) node {cut} (4.5,0.8) node {(i)};
\draw[ultra thick, >=stealth, xshift=0.5cm] (6,1) edge[<-] (7,0) (7,0) edge[->] (8,1) (9,1) edge[->] (9,0);
\draw[xshift=0.5cm] (6.2,0.2) node {$X_1$} (7.8,0.2) node {$X_2$} (8.6,0.6) node {$X_3$}; 
\end{tikzpicture}
\end{subfigure}

\vspace{.3in}
\begin{subfigure}{0.7\textwidth}
\centering
\begin{tikzpicture}
\draw[ultra thick, >=stealth, xshift=-0.5cm] (0,1) edge[<-] (1,0) (1,0) edge[->] (2,1) (2,1) edge[->] (3,0);
\draw[xshift=-0.5cm] (0.2,0.2) node {$X_1$} (1.8,0.2) node {$X_2$} (2.8,0.8) node {$X_3$};
\draw (3.5,0.5) edge [->, >=stealth] (5.5,0.5) (4.5,0.2) node {join} (4.5,0.8) node {(ii)};
\draw[ultra thick, >=stealth, xshift=0.5cm] (6,1) edge[<-] (7,0) (7,0) edge[->] (8,1) (8,1) edge[->] (6,1);
\draw[xshift=0.5cm] (6.2,0.2) node {$X_1$} (7.8,0.2) node {$X_2$} (7,.7) node {$X_3$};
\draw[white, xshift=0.5cm] (8,1) edge[->] (9,0);
\end{tikzpicture}
\end{subfigure}

\begin{subfigure}{0.7\textwidth}
\centering
\begin{tikzpicture}
\draw[ultra thick, >=stealth, xshift=-0.5cm] (0,1) edge[<-] (1,0) (1,0) edge[->] (2,1) (2,1) edge[->] (3,0);
\draw[xshift=-0.5cm] (0.2,0.2) node {$X_1$} (1.8,0.2) node {$X_2$} (2.8,0.8) node {$X_3$};
\draw (3.5,0.5) edge [->, >=stealth] (5.5,0.5) (4.5,0.2) node {cut-join} (4.5,0.8) node {(iii)};
\draw[ultra thick, >=stealth, xshift=0.5cm] (7,1) edge[<-, bend right=60, looseness=1.2] (7,0) (7,0) edge[->, bend right=60, looseness=1.2] (7,1) (8,1) edge[->] (9,0);
\draw[xshift=0.5cm] (6.2,0.2) node {$X_1$} (7.8,0.2) node {$X_2$} (8.8,0.8) node {$X_3$};
\draw[white, xshift=0.5cm] (7,1.5) node {$X_3$};
\end{tikzpicture}
\end{subfigure}

\begin{subfigure}{0.7\textwidth}
\centering
\begin{tikzpicture}
\draw[ultra thick, >=stealth, xshift=-0.5cm] (0,1) edge[<-] (1,0) (1,0) edge[->] (2,1) (2,1) edge[->] (3,0);
\draw[xshift=-0.5cm] (0.2,0.2) node {$X_1$} (1.8,0.2) node {$X_2$} (2.8,0.8) node {$X_3$};
\draw (3.5,0.5) edge [->, >=stealth] (5.5,0.5) (4.5,0.2) node {double-cut-join} (4.5,0.8) node {(iv)};
\draw[ultra thick, >=stealth, xshift=0.5cm] (6,1) edge[<-] (7,0) (8,1) edge[->] (7,0) (8,1) edge[->] (9,0);
\draw[xshift=0.5cm] (6.2,0.2) node {$X_1$} (7.8,0.2) node {$X_2$} (8.8,0.8) node {$X_3$}; 
\draw[white, xshift=0.5cm] (7,1.5) node {$X_3$};
\end{tikzpicture}
\end{subfigure}
\caption{(i) Adjacency $X_2^hX_3^t$ is cut. (ii) Telomeres $X_1^h$ and $X_3^h$ are joined. (iii) Adjacency $X_2^hX_3^t$ is cut, and resulting telomere $X_2^h$ is joined with $X_1^h$. (iv) Adjacencies $X_1^tX_2^t$ and $X_2^hX_3^t$ are
replaced with $X_1^tX_2^h$ and $X_2^tX_3^t$ \cite[Fig.~2]{bailey2023complexity}}. 
\label{operations-fig}
\end{figure}
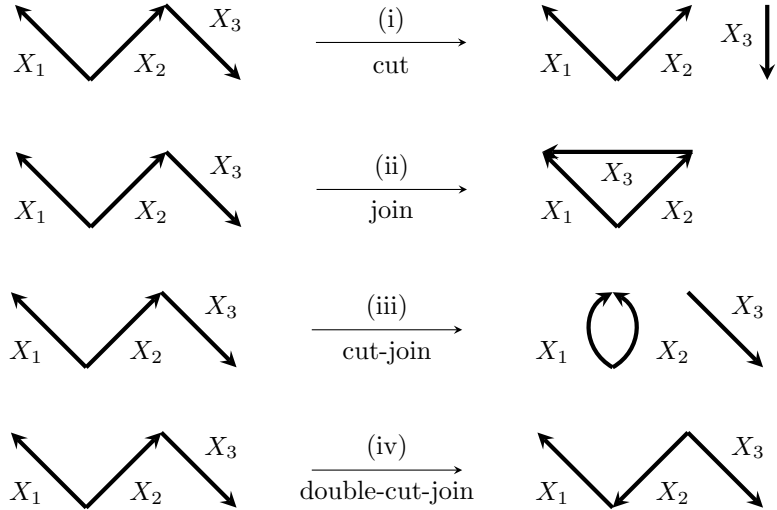

Note that a cut-join operation combines one cut and one join into a single operation, and a double-cut-join operation performs two cuts and two joins in one operation. See Figure~\ref{operations-fig} for an illustration of these operations. 

Several key models are based on these operations.  The \emph{Double Cut-and-Join~(DCJ)}~model was initially introduced by Yancopoulos,~Attie,~\&~Friedberg~\cite{YancopoulosAttieFriedberg} and permits all four operations. Later, \mbox{Feijão~\&~Meidanis~\cite{FeijaoMeidanis}} introduced the \emph{Single~Cut or Join~(SCoJ)}~model, which only allows operations (i) and (ii). Alternatively, the \emph{Single~Cut-and-Join~(SCaJ)}~model~\cite{BergeronMedvedevStoye} allows operations (i)--(iii), but not operation (iv). In this paper, we consider the Single Cut-and-Join model.

\begin{definition}\label{def:s}
    For any two genomes $G_1$ and $G_2$ with the same set of edge labels, there is a sequence of Single Cut-and-Join operations that transforms $G_1$ into $G_2$. Such a sequence is called a \emph{scenario}.
   The minimum possible length of such a scenario is called the \emph{distance} and is denoted $d(G_1, G_2)$. 
   An operation on a genome $G_1$ that (strictly) decreases the distance to genome $G_2$ is called a \emph{sorting operation} for $G_1$ and $G_2$. A scenario requiring $d(G_{1}, G_{2})$ operations to transform $G_{1}$ into $G_{2}$ is called a \emph{most parsimonious scenario} or \emph{sorting scenario}. When $G_{2}$ is understood, we refer to the action of transforming $G_{1}$ into $G_{2}$ using the minimum number of operations as \emph{sorting} $G_{1}$.  The number of most parsimonious scenarios transforming $G_{1}$ into $G_{2}$ is denoted $\mps(G_{1}, G_{2})$. 
\end{definition}

We now turn to defining the key algorithmic problem that we will consider in this paper.

\begin{definition} \label{def:DistancePairwiseRearrangement}
Let $G_{1}$ and $G_{2}$ be genomes. The \algprobm{Distance} problem asks to compute $d(G_{1}, G_{2})$. The \algprobm{Pairwise Rearrangement} problem asks to compute $\mps(G_1,G_2)$. 
\end{definition}

To investigate \algprobm{Pairwise Rearrangement}, we begin by introducing the adjacency graph. 

\begin{definition}\label{def:adjgraph}
    Given two genomes $G_1$ and $G_2$ with the same set of edge labels, the \emph{adjacency graph} $A(G_1,G_2)$ is a bipartite undirected multigraph $(V_{1} \dot\cup V_{2}, E)$ 
    where the vertices in $V_{i}$ are the adjacencies and telomeres in $G_i$ and for any $X\in V_1$ and $Y\in V_2$, the number of edges between $X$ and $Y$ is $|X\cap Y|$. 
\end{definition}

\begin{figure}[!h]
\centering
\begin{minipage}{0.8\textwidth}
\centering
\hspace{2.1cm}\begin{tikzpicture}[scale=1.2, every node/.style={scale=1.2}]
\draw[ultra thick, >=stealth] (0,1) edge[<-] (1,0) (1,0) edge[->] (2,1) (2,1) edge[<-] (3,0) (3,0) edge[<-] (4,1);
\draw[ultra thick, >=stealth] (5,0) edge[<-] (7,0) (7,0) edge [->] (6,1) (6,1) edge [<-] (5,0);
\draw (0.2,0.2) node {$X_1$} (1.8,0.2) node {$X_2$} (2.8,0.8) node {$X_3$} (3.8,0.2) node {$X_4$} (5.2,0.8) node {$X_5$} (6.8,0.8) node {$X_6$} (6.1,-0.4) node {$X_7$};
\draw[thick] (-1,0.5) node {$G_1$};
\end{tikzpicture}
\end{minipage}

\begin{minipage}{0.8\textwidth}
\centering
\begin{tikzpicture}[scale=1.2, every node/.style={scale=1.2}]
\draw[thick] (0,0) -- (0,-2) (3,0)-- (1.9,-2) -- (1,0) -- (0.9,-2) -- (2,0) --(2.9,-2) --(3,0)  (4,0) -- (3.75,-2) (5.25,-2) edge[bend right=40] (5,0) (5,0) edge[bend right=40] (5.25,-2) (4.4,-2) -- (6,0) -- (6.95,-2) (7,0) edge[bend left=40] (6.1,-2) (7,0) edge[bend right=40] (6.1,-2); 
\draw[thick] (0,0) node [myStyle] {} (1,0) node [myStyle] {} (2,0) node [myStyle] {} (3,0) node [myStyle] {} (4,0) node [myStyle] {} (5,0) node [myStyle] {} (6,0) node [myStyle] {} (7,0) node [myStyle] {};
\draw[thick] (0,-2) node [myStyle] {} (0.9,-2) node [myStyle] {} (1.9,-2) node [myStyle] {} (2.9,-2) node [myStyle] {} (3.75,-2) node [myStyle] {} (4.4,-2) node [myStyle] {} (5.25,-2) node [myStyle] {} (6.1,-2) node [myStyle] {} (6.95,-2) node [myStyle] {};
\draw[thick, white] (-1.5,0.5) node {$G_1$};
\draw[thick] (0,0.3) node {\footnotesize{$X_1^h$}} (1,0.3) node {\footnotesize{$X_1^tX_2^t$}} (2,0.3) node {\footnotesize{$X_2^hX_3^h$}} (3,0.3) node {\footnotesize{$X_3^tX_4^h$}} (4,0.3) node {\footnotesize{$X_4^t$}} (5,0.3) node {\footnotesize{$X_5^tX_7^h$}} (6,0.3) node {\footnotesize{$X_5^hX_6^h$}} (7,0.3) node {\footnotesize{$X_6^tX_7^t$}};
\draw[thick] (0,-2.3) node {\footnotesize{$X_1^h$}} (0.9,-2.3) node {\footnotesize{$X_1^tX_2^h$}} (1.9,-2.3) node {\footnotesize{$X_2^tX_3^t$}} (2.9,-2.3) node {\footnotesize{$X_3^hX_4^h$}} (3.75,-2.3) node {\footnotesize{$X_4^t$}} (4.4,-2.3) node {\footnotesize{$X_5^h$}} (5.25,-2.3) node {\footnotesize{$X_5^tX_7^h$}} (6.1,-2.3) node {\footnotesize{$X_6^tX_7^t$}} (6.95,-2.3) node {\footnotesize{$X_6^h$}};
\draw (-1.0,0.3) node {$V_1$};
\draw  (-1.0,-2.3) node {$V_2$};
\draw[thick, decorate, decoration={calligraphic brace, amplitude=3mm}] (-1.4,-2.5) -- (-1.4,0.5) (-2.6,-1.0) node {$A(G_1,G_2)$}; 
\end{tikzpicture}
\end{minipage}

\begin{minipage}{0.8\textwidth}
\centering
\hspace{2.1cm}\begin{tikzpicture}[scale=1.2, every node/.style={scale=1.2}]
\draw[ultra thick, >=stealth] (0,1) edge[<-] (1,0) (2,1) edge[->] (1,0) (2,1) edge[->] (3,0) (3,0) edge[<-] (4,1);
\draw[ultra thick, >=stealth] (5.3,0) edge[<-] (6.5,0) (6.5,0) edge [->] (7,1) (4.8,1) edge [<-] (5.3,0);
\draw (0.2,0.2) node {$X_1$} (1.8,0.2) node {$X_2$} (2.8,0.8) node {$X_3$} (3.8,0.2) node {$X_4$} (5.4,0.5) node {$X_5$} (6.4,0.5) node {$X_6$} (6.1,-0.4) node {$X_7$};
\draw[thick] (-1,0.5) node {$G_2$};
\draw[thick, white] (0.5,1.4) node {$X_6$};
\end{tikzpicture}
\end{minipage}
\caption{ The adjacency graph $A(G_1,G_2)$ is shown in the middle, for genomes $G_1$ and $G_2$ shown above and below, respectively \cite[Fig.~3]{bailey2023complexity}.}
\label{fig:adj graph}
\end{figure}

Note that each vertex in an adjacency graph $A(G_1,G_2)$ must have either degree 1 or 2 (corresponding, respectively, to telomeres and adjacencies in the original genome), and so  $A(G_1,G_2)$ is composed entirely of disjoint cycles and paths. Note also that every operation on $G_1$ corresponds to an operation on $V_1$ in $A(G_1,G_2)$.  For example, in Figure~\ref{fig:adj graph} the cut operation on $G_1$ which separates adjacency $X_5^hX_6^h$ into telomeres $X_5^h$ and $X_6^h$ equates to separating the corresponding vertex $X_5^hX_6^h$ in $V_1$ into two vertices $X_5^h$ and $X_6^h$, thus splitting the path of length 2 in $A(G_1,G_2)$ into two disjoint paths of length 1. In a similar fashion, a join operation on $G_1$ corresponds to combining two vertices $a$, $b$ in $V_1$ into a single vertex $ab$, and a cut-join operation on $G_1$ corresponds to replacing vertices $ab$, $c$ in $V_1$ with vertices $ac$, $b$. Whether or not an operation on $A(G_{1}, G_{2})$ corresponds to a sorting operation on $G_1$---that is, whether it decreases the distance to $G_2$ or not---depends highly on the structure of the components upon which we are acting. 
To better describe such sorting operations, we adopt the following classification of components in $A(G_1,G_2)$ and notion of their size:

\begin{definition}\label{defn:size} The possible connected components of $A(G_{1}, G_{2})$ are classified as follows:
\begin{itemize}
    \item A $W$-\emph{shaped component} is an even path with its two endpoints in $V_1$. 
    \item An $M$-\emph{shaped component} is an even path with its two endpoints in $V_2$.
    \item An $N$-\emph{shaped component} is an odd path.
    \item A \emph{crown} is an even cycle.
\end{itemize}  \end{definition}

\begin{definition}
    The \emph{size} of a component $B$ in $A(G_1,G_2)$ is defined to be $\lfloor \, |E(B)|/2\, \rfloor$. We refer to an $N$-shaped component of size 0 (a single edge) as a \emph{trivial path}, and a crown of size 1 (a 2-cycle) as a \emph{trivial crown}.
\end{definition}

The language ``trivial'' is motivated by the fact that such components indicate where $G_1$ and $G_2$ already agree, and hence no sorting operations are required on vertices belonging to trivial components (see, e.g., the trivial components in Figure~\ref{fig:adj graph}). Indeed, a sorting scenario can be viewed as a minimal length sequence of operations which produces an adjacency graph consisting of only trivial components. Notably, a scenario is a sorting scenario if and only if it consists exclusively of sorting operations. We now list precisely the sorting operations below.

\begin{observation}[{\cite[Observation~2.7]{bailey2023complexity}}] \label{obs:SortingScenarios}
\noindent In the SCaJ model, a case analysis of all operations yields precisely these as the only   
sorting operations on $A(G_1,G_2)$:
\begin{enumerate}[label=(\alph*)]
\item A cut-join operation on a nontrivial $N$-shaped component, producing an $N$-shaped component and a trivial crown
\item A cut-join operation on a $W$-shaped component of size at least 2, producing a trivial crown and a $W$-shaped component
\item A join operation on a $W$-shaped component of size 1, producing a trivial crown
\item A cut operation on an $M$-shaped component, producing two $N$-shaped components
\item A cut operation on a nontrivial crown, producing a $W$-shaped component
\item A cut-join operation on an $M$-shaped component and a $W$-shaped component, where an adjacency in the $M$-shaped component is cut and joined to a telomere in the $W$-shaped component, producing two $N$-shaped components
 \item A cut-join operation on a nontrivial crown and an $N$-shaped component, where an adjacency in the crown is cut and joined to the telomere in the $N$-shaped component, producing an $N$-shaped component
\item A cut-join operation on a nontrivial crown and a $W$-shaped component, where an adjacency from the crown is cut and joined with a telomere from the $W$-shaped component, producing a $W$-shaped component
\end{enumerate}
\end{observation}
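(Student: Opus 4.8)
The plan is a complete, finite case analysis driven by the Bergeron--Medvedev--Stoye distance formula, which I take as given. Recall that $d(G_1,G_2)$ decomposes additively over the connected components $B$ of $A(G_1,G_2)$, the contribution of $B$ depending only on its shape and size: a non-trivial $W$-, $M$-, or $N$-shaped component of size $k$ contributes $k$, a non-trivial crown of size $k$ contributes $k+1$, and every trivial component contributes $0$. Two observations bound the analysis. First, every SCaJ operation is undone by another SCaJ operation---a cut by a join and conversely, a cut-join by a cut-join---so a single operation changes $d(G_1,G_2)$ by at most $1$; hence an operation is a sorting operation exactly when it strictly decreases the distance. Second, a cut acts on one degree-$2$ vertex of $V_1$, a join on two degree-$1$ vertices of $V_1$, and a cut-join on one of each, so every operation is supported on at most two components of $A(G_1,G_2)$, and by additivity it suffices to compute how $\sum_B d(B)$ changes on those components.

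I would then enumerate all operations by type, using repeatedly that crowns and $M$-shaped components have no degree-$1$ vertex of $V_1$ (so they are never the site of a join and admit no internal cut-join), while $W$-shaped components have exactly two such vertices and $N$-shaped components exactly one. A plain \emph{cut} removes an interior degree-$2$ vertex of a $W$- or $N$-shaped component, or an arbitrary degree-$2$ vertex of an $M$-shaped component or a crown: one checks that it splits a $W$ into two $W$'s, splits an $N$ into a $W$ and an $N$, splits an $M$ into two $N$'s, and opens a crown into a $W$. A plain \emph{join} merges two degree-$1$ vertices lying in one $W$ (producing a crown) or in two path components ($W+W\to W$, $W+N\to N$, $N+N\to M$). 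A \emph{cut-join} performs one of the four cuts above and re-attaches one of the two resulting ends to a degree-$1$ vertex of some $W$- or $N$-shaped component, possibly the same component that was cut. For each configuration I would record the resulting multiset of components in terms of the sizes of the components involved and read off the change in $\sum_B d(B)$; the configurations producing exactly $-1$ are precisely (a)--(h), and every other configuration leaves the distance unchanged or increases it. In particular the only sorting join is the one on a size-$1$ $W$-shaped component, the only single-component sorting cuts are on an $M$-shaped component and on a non-trivial crown, and cut-joins spanning two distinct components are sorting only in the three configurations (f), (g), (h).

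The step I expect to demand the most care is the cut-join analysis, since there one tracks two parameters at once---where inside a component the cut falls, and which of the two resulting ends is re-joined---and must carry shape-and-size data correctly through both the split and the merge. The delicate ingredient is the size arithmetic under merging: joining two $N$-shaped components of sizes $p$ and $q$ at their $V_1$-endpoints produces an $M$-shaped component of size $p+q+1$, whereas joining an $N$-shaped component of size $p$ to a $W$-shaped component of size $q$ produces an $N$-shaped component of size $p+q$; this parity-driven offset is precisely what separates the sorting cut-join on an $M$-shaped and a $W$-shaped component (case (f)) from the non-sorting cut-join on an $M$-shaped and an $N$-shaped component. One must also avoid miscounting when an intermediate component is trivial (a single-edge $N$-shaped component or a $2$-cycle crown), as these contribute $0$ rather than their nominal size. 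With these conventions fixed, each remaining case is a one-line computation, and the full list of sorting operations is exactly (a)--(h).
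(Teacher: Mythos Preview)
Your approach is correct and is precisely the ``case analysis of all operations'' that the observation itself names; the paper does not give a proof here but simply cites the result from \cite{bailey2023complexity}, so there is nothing further to compare against. Your identification of the key bookkeeping point---the $+1$ offset when two $N$-shaped components are joined at their $V_1$-ends to form an $M$-shaped component, versus no offset when an $N$ is joined to a $W$---is exactly what makes the two-component cut-join cases come out as they do, and your reversibility argument cleanly establishes that $|\Delta d|\le 1$ so that ``sorting'' is equivalent to $\Delta d=-1$.
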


\begin{figure}[!ht]
\centering
\begin{tikzpicture}[scale=0.9]
\draw[ultra thick, >=stealth, xshift=-0.5cm] (0,1) edge[-] (0,.5) (0,.5) edge[->] (-.3,0) (0,.5) edge[->] (.3,0);
\draw[xshift=-0.5cm] (0,1.2) node {$N$} (-.5,-0.2) node {$N$} (.5,-0.2) node {$T$} (-1.3,1.7) node {(I)};
\draw[ultra thick, >=stealth, xshift=2.0cm] (0,1) edge[-] (0,.5) (0,.5) edge[->] (-.3,0) (0,.5) edge[->] (.3,0);
\draw[xshift=2.0cm] (0,1.5) node {$W$} 
(0,1.2) node {(size $>1$)}
(-.5,-0.2) node {$W$} (.5,-0.2) node {$T$}; 
\draw [ultra thick, >-stealth,xshift=4.5cm, double distance=3pt,
             arrows = {-Latex[length=0pt 2 .3]}] (0,1) -- (0,0);
\draw[xshift=4.5cm] (0,1.5) node {$W$} (0,1.2) node {(size 1)} (0,-0.2) node {$T$}; 
\draw[ultra thick, >=stealth, xshift=7.0cm] (0,1) edge[dashed,-] (0,.5) (0,.5) edge[dashed,->] (-.3,0) (0,.5) edge[dashed,->] (.3,0);
\draw[xshift=7.0cm] (0,1.2) node {$M$} (-.5,-0.2) node {$N$} (.5,-0.2) node {$N$}; 
\draw[ultra thick, >=stealth, xshift=-0.5cm,yshift=-2.5cm] (0,1) edge[dashed,->] (0,0) ;
\draw[xshift=-0.5cm,yshift=-2.5cm] (0,1.2) node {$C$} (0,-0.2) node {$W$} ; 
 \draw[ultra thick, >=stealth, xshift=2.0cm,yshift=-2.5cm] (0,1) edge[-] (0,.5) (0,.5) edge[->] (-.3,0) (0,.5) edge[->] (.3,0);
 \draw[xshift=2.0cm,yshift=-2.5cm] (0,1.2) node {$M \sim W$} (-.5,-0.2) node {$N$} (.5,-0.2) node {$N$}; 
 \draw[ultra thick, >=stealth, xshift=4.5cm,yshift=-2.5cm] (0,1) edge[->] (0,0) ;
 \draw[xshift=4.5cm,yshift=-2.5cm] (0,1.2) node {$C \sim N$} (0,-0.2) node {$N$} ; 
\draw[ultra thick, >=stealth, xshift=7.0cm,yshift=-2.5cm] (0,1) edge[->] (0,0) ;
\draw[xshift=7.0cm,yshift=-2.5cm] (0,1.2) node {$C \sim W$} (0,-0.2) node {$W$} ; 
\draw[xshift=9.5cm,yshift=-1cm] (0,1) edge[->] (1,1) 
(1.5,1) edge[->] (2.5,.35) 
(0,-0.5) edge[->] (1,-0.5) 
(1.5,-0.5) edge[->] (2.5,.15) 
(1.25,.8) edge[->] (1.25,-.25) 
(0,.8) edge[->] (1,-.3) 
;
\draw[xshift=9.5cm,yshift=-1cm] (-1.2,3.0) rectangle (3.2,-2.0); 
\draw[xshift=9.5cm,yshift=-1cm] (-11.7,3.0) rectangle (3.2,-2.0); 
\draw[->, xshift=9.5cm, yshift=-1cm] (1.0,1.2) arc (240:-55:.4); 
\draw[<-, xshift=9.5cm, yshift=-1cm] (1.4,-0.7) arc (65:-240:.4); 
\draw[xshift=9.5cm,yshift=-1cm] (-.3,1) node {$C$} (1.25,1) node {$W$} (2.75,.25) node {$T$} (-.3,-0.5) node {$M$} (1.25,-0.5) node {$N$}
(-.8,2.7) node {(II)};
\end{tikzpicture}
\caption{This figure depicts the operations described in Observation~\ref{obs:SortingScenarios}, where the arrows point from the original component type(s) to the component type(s) produced by the three operations allowed in the SCaJ model: cut, join, and cut-join. Here, $T$ denotes a trivial crown and $C$ denotes a nontrivial crown. The eight diagrams in (I) show each of the sorting operations (a)--(h) where bold single arrows represent cut-join operations, dashed arrows represent cut operations, and the double arrow represents the join operation. Diagram (II) summarizes which components can be produced.  Note that all operations will only result in $W$-shaped components, $N$-shaped components, and/or trivial crowns. }
\label{operations-drawing}
\label{fig:sortingops}
\end{figure}
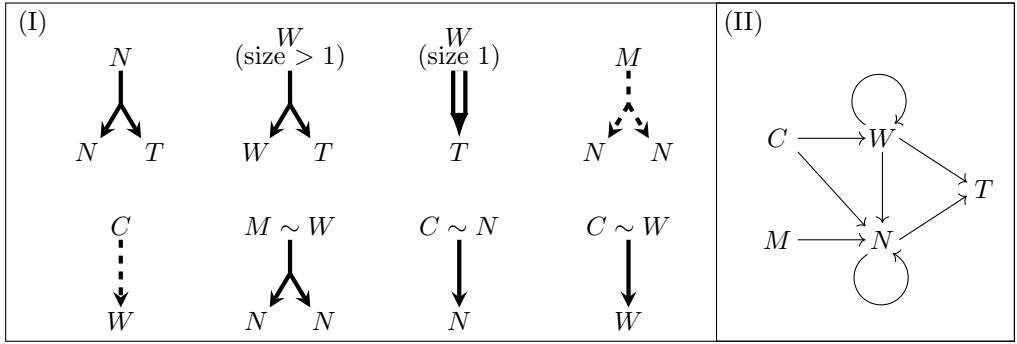

Note that (a)--(e) are sorting operations on $G_1$ that operate on only one component in the adjacency graph, though they may produce two different components. On the other hand, (f)--(h) are sorting operations on $G_1$ that operate on two separate components in the adjacency graph. See Figure~\ref{fig:sortingops} for a visualization of each sorting operation.

Using these sorting operations on $A(G_1,G_2)$, the distance between two genomes $G_1$ and $G_2$ for the SCaJ model is given by
\begin{equation} \label{eq:distance}
    d(G_1,G_2) = n - \frac{\#N}{2} - \#T + \#C
\end{equation}
where $n$ is the number of edges in $G_1$ (equivalently, one half of the number of edges in $A(G_1,G_2)$), $\#N$ is the number of $N$-shaped components, $\#T$ is the number of trivial crowns, and $\#C$ is the number of nontrivial crowns~\cite{BergeronMedvedevStoye}.  

Let $\mathcal{B}$ be the set of all components of $A(G_1,G_2)$ and let $\mathcal{B}'$ be a subset of $\mathcal{B}$. Define
\begin{align} 
d(\mathcal{B}'):=\left(\sum_{B\in\mathcal{B}'}\text{size}(B)\right)-\#T_{\mathcal{B}'}+\#C_{\mathcal{B}'} \label{eq:SubsetDistance}
\end{align} 
where $\#T_{\mathcal{B}'}$ and $\#C_{\mathcal{B}'}$ are the number of trivial crowns and nontrivial crowns in $\mathcal{B}'$, respectively. The quantity $d(\mathcal{B}')$ is the minimum number of operations needed to transform all components of $\mathcal{B}'$ into trivial components, with no operation acting on a component not belonging to $\mathcal{B}'$. 
Note that $d(\mathcal{B})=d(G_1,G_2)$, as the $\frac{\#N}2$ term is absorbed into the summation of the sizes of all components.

\begin{definition}[{\cite[Definition~2.8]{bailey2023complexity}}]\label{def:equivrelation}
Let $A$ and $B$ be components of an adjacency graph, and consider a particular sorting scenario. We say $A\sim B$ if either $A=B$ or there is a cut-join operation in the scenario where an extremity $a$ from $A$ and an extremity $b$ from $B$ are joined into an adjacency. The transitive closure of $\sim$ is an equivalence relation which we call \emph{sort together}. We will be particularly interested in subsets of the equivalence classes of ``sort together.'' We abuse terminology by referring to such a subset as a set that \emph{sorts together}.
\end{definition}

Note that if two components $A$ and $B$ in $A(G_1, G_2)$ satisfy $A\sim B$, the cut-join operation does not need to occur immediately. For example, two nontrivial crowns $C_1$ and $C_2$ can satisfy $C_1\sim C_2$ by first cutting $C_1$ to produce a $W$-shaped component, then operation (b) can be applied one or more times before operation (h) sorts $C_2$ and the remaining $W$-shaped component together; see Figure~\ref{operations-drawing}. 

We will now recall additional notation from~\cite{bailey2023complexity} that we will use in this paper. For a subset $\mathcal{B'}$ of $\mathcal{B}$, define $\mps(\mathcal{B'})$ as the number of sequences with $d(\mathcal{B'})$ operations in which the components of  $\mathcal{B}'$ are transformed into trivial components with no operation acting on a component not belonging to $\mathcal{B}'$.  Note that $\mps(\mathcal{B})$ is the number of most parsimonious scenarios transforming $G_{1}$ into $G_{2}$.  
Let $\stg( \mathcal{B}')$ denote the number of sequences with $d(\mathcal{B}')$ operations in which the components of  $\mathcal{B}'$ sort together and are transformed into trivial components with no operation acting on a component not belonging to $\mathcal{B}'$.

Note that if $\mathcal{B}'$ and $ \mathcal{B}''$ are two subsets of $\mathcal{B}$ that have all the same component types with all the same sizes, then $\mps(\mathcal{B}')=\mps(\mathcal{B}'')$ and $\stg(\mathcal{B}')=\stg(\mathcal{B}'')$. Going forward, we will often care about values of $\mps$ and $\stg$ only in the context of their component types and sizes. Suppose we are given multisets $\mathcal{C}$, $\mathcal{M}$, $\mathcal{W}$, and $\mathcal{N}$ of nonnegative integers with every element of $\mathcal{C}$ at least $2$ and every element of $\mathcal{M}$ and $\mathcal{W}$ at least $1$. We define $\mps(\mathcal{C},\mathcal{M},\mathcal{W},\mathcal{N})$ to equal $\mps(\mathcal{B}')$ for any set of components $\mathcal{B}'$ with $\ab{\mathcal{C}}$ nontrivial crowns of sizes in $\mathcal{C}$, $\ab{\mathcal{M}}$ $M$-shaped components of sizes in $\mathcal{M}$, $\ab{\mathcal{W}}$ $W$-shaped components of sizes in $\mathcal{W}$, and $\ab{\mathcal{N}}$ $N$-shaped components of sizes in $\mathcal{N}$. We define $\stg(\mathcal{C},\mathcal{M},\mathcal{W},\mathcal{N})$ similarly.

\section{Combinatorics of Genome Rearrangement}

In this section, we will develop key combinatorial tools that will be used in Section~\ref{sec:FPT} to prove Theorem~\ref{thm:MainFPT}. Our strategy will be to build a lookup table for dynamic programming. In particular, our technique relies crucially on the following lemma.

\begin{lemma}[{\cite[Lemma~A.4]{bailey2023complexity}}] \label{lem:partcount}
Let $\mathcal{B'}$ be a subset of components of an adjacency graph, and let $\Pi(\mathcal{B'})$ denote the set of all partitions of $\mathcal{B'}$. We have
\[
\mps(\mathcal{B'})=\sum_{\pi\in\Pi(\mathcal{B'})}\binom{d(\mathcal{B'})}{d(\pi_1),d(\pi_2),\ldots,d(\pi_{p(\pi)})}\prod_{i=1}^{p(\pi)}\stg(\pi_i),
\]
where $\pi = \{\pi_1,\pi_2, \ldots, \pi_{p(\pi)}\}$.
\end{lemma}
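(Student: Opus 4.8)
The plan is to prove the identity by a direct combinatorial decomposition of the set of most parsimonious scenarios for $\mathcal{B'}$ according to the ``sort together'' equivalence relation of Definition~\ref{def:equivrelation}. Fix a subset $\mathcal{B'}$ of components. Given any most parsimonious scenario $S$ for $\mathcal{B'}$ (one of the $\mps(\mathcal{B'})$ sequences of $d(\mathcal{B'})$ operations transforming all components of $\mathcal{B'}$ into trivial components, touching nothing outside $\mathcal{B'}$), the relation $\sim$ induced by $S$ has a transitive closure that partitions $\mathcal{B'}$ into blocks; call this partition $\pi(S) = \{\pi_1,\dots,\pi_{p(\pi)}\} \in \Pi(\mathcal{B'})$. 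This gives a map from scenarios to partitions, and I would count, for each fixed $\pi \in \Pi(\mathcal{B'})$, the number of scenarios $S$ with $\pi(S) = \pi$; summing over $\pi$ then yields the left-hand side.

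First I would observe that because no operation in $S$ acts on a component outside $\mathcal{B'}$, and because two components end up in the same block of $\pi(S)$ exactly when they are linked (possibly transitively through intermediate components) by a cut-join that merges extremities, the operations of $S$ split into $p(\pi)$ disjoint sub-sequences $S_1,\dots,S_{p(\pi)}$, where $S_i$ consists precisely of the operations acting on components that descend from block $\pi_i$. Each $S_i$ is, by construction, a sequence in which the components of $\pi_i$ sort together and are transformed into trivial components with no operation leaving $\pi_i$; moreover $S_i$ must have exactly $d(\pi_i)$ operations (it cannot have fewer since $d(\pi_i)$ is the minimum by Equation~(\ref{eq:SubsetDistance}), and the total $\sum_i |S_i| = |S| = d(\mathcal{B'}) = \sum_i d(\pi_i)$ forces equality in each coordinate — here I would invoke, or re-derive, that $d$ is additive over a partition, i.e.\ $d(\mathcal{B'}) = \sum_{i} d(\pi_i)$, which is immediate from the definition of $d(\cdot)$ as a sum of sizes minus $\#T$ plus $\#C$). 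Hence each $S_i$ is counted by $\stg(\pi_i)$.

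Next, conversely, I would show that to reconstruct $S$ from the data $(\pi, S_1,\dots,S_{p(\pi)})$ one needs only to specify how the $d(\pi_i)$ operations of the various $S_i$ are interleaved into a single sequence of length $d(\mathcal{B'})$: since the sub-sequences act on disjoint sets of components (the blocks of $\pi$ keep their extremities within themselves), any interleaving produces a valid scenario, and distinct interleavings with distinct choices of the $S_i$ produce distinct scenarios, while every scenario with $\pi(S)=\pi$ arises this way. The number of interleavings is the multinomial coefficient $\binom{d(\mathcal{B'})}{d(\pi_1),\dots,d(\pi_{p(\pi)})}$. Therefore the number of scenarios mapping to $\pi$ is $\binom{d(\mathcal{B'})}{d(\pi_1),\dots,d(\pi_{p(\pi)})}\prod_{i=1}^{p(\pi)}\stg(\pi_i)$, and summing over all $\pi \in \Pi(\mathcal{B'})$ gives the claimed formula.

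I expect the main obstacle to be the rigorous justification that the decomposition $S \mapsto (\pi(S), S_1,\dots,S_{p(\pi)})$ is a genuine bijection with its image — in particular, that the operations of $S$ really do partition cleanly according to the blocks (no operation can touch two components in different blocks, since such an operation would be a cut-join merging their extremities and hence would put them in the same block), that each restricted sequence $S_i$ is by itself a legal length-$d(\pi_i)$ ``sort together'' scenario (this uses that sorting operations never create cross-block dependencies and that the intermediate component types appearing within $\pi_i$ stay within the universe described by Observation~\ref{obs:SortingScenarios} and Figure~\ref{fig:sortingops}), and that reassembling any interleaving of legal $S_i$'s yields a legal scenario for $\mathcal{B'}$ with $\pi(S)=\pi$ exactly. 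This is essentially bookkeeping, but it is the crux; the multinomial and product factors then fall out of standard interleaving counting. (Since this lemma is quoted from \cite{bailey2023complexity}, I would in practice cite that source for the full argument and include only this sketch.)
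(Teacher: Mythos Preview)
Your proposal is correct and follows the standard partition-and-interleave argument. Note, however, that the present paper does not actually prove Lemma~\ref{lem:partcount}; it is quoted verbatim from \cite[Lemma~A.4]{bailey2023complexity}, and later proofs (e.g., Lemma~\ref{lem:recm}) merely invoke ``the strategy of the proof of Lemma~\ref{lem:partcount} found in \cite[Lemma~A.4]{bailey2023complexity}.'' Your sketch---classifying each most parsimonious scenario by the partition its sort-together relation induces, showing the operations split into block-local subsequences of lengths $d(\pi_i)$ by additivity of $d$, and then counting interleavings by a multinomial---is exactly the argument one expects in the cited source, and your closing parenthetical already anticipates that in practice you would simply cite it.
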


We will utilize Lemma~\ref{lem:partcount} in the following manner. Fix an entry in the lookup table, and let $\mathcal{B}'$ denote the set of components being considered at said entry. In order to compute $\mps(\mathcal{B}')$, we will proceed as follows. Fix a partition $\pi \in \Pi(\mathcal{B}')$. For each $i \in [p(\pi)]$, we first check if $\stg(\pi_{i}) = 0$. This step is computable in polynomial-time by checking whether there exists a permissible sorting operation (see Theorem~\ref{thm:invalidsort}). If $\stg(\pi_{i}) \neq 0$, then we access the entry in the lookup table for $\stg(\pi_{i})$. This will allow us to compute 
\[
\binom{d(\mathcal{B'})}{d(\pi_1),d(\pi_2),\ldots,d(\pi_{p(\pi)})}\prod_{i=1}^{p(\pi)}\stg(\pi_i).
\]
We will show later that as the number of nontrivial crowns in the adjacency graph is bounded (by assumption), there are only a polynomial number of partitions $\pi = (\pi_1, \ldots, \pi_{p(\pi)}) \in \Pi(\mathcal{B}')$ such that $\stg(\pi_{i}) \neq 0$ for all $i \in [p(\pi)]$.

We will now investigate how to compute $\stg(\pi_{i})$, which requires studying which sets of components can and cannot sort together.

\begin{proposition}\label{prop:invalidsort}
Given a set $\compset$ of components of some adjacency graph $G$, $\stg\p{\compset}=0$ if $\compset$ has any of the following properties:
\begin{enumerate}
\item $\compset$ contains at least two components, at least one of which is a trivial crown. 
\item $\compset$ contains more than one $W$-shaped component.
\item $\compset$ contains more than one $M$-shaped component.
\item $\compset$ contains more than one path and at least one $N$-shaped component.
\end{enumerate}
\end{proposition}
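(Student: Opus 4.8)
\emph{Proof plan.} A sequence counted by $\stg(\compset)$ has length exactly $d(\compset)$, hence is a sorting scenario for $\compset$ and so consists entirely of sorting operations, each acting on one or two components descended from $\compset$ and none acting outside $\compset$. Thus $\stg(\compset)\neq 0$ precisely when some such scenario places all of $\compset$ in a single equivalence class of the \emph{sort together} relation of Definition~\ref{def:equivrelation}; to prove the proposition it therefore suffices to show, under each of the hypotheses (1)--(4), that no sorting scenario on the descendants of $\compset$ can join all of $\compset$ under the transitive closure of $\sim$.

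For (1): inspecting the list (a)--(h) of Observation~\ref{obs:SortingScenarios} shows that a trivial crown is never an input to a sorting operation. Hence, in any sorting scenario on the descendants of $\compset$, no operation ever acts on the trivial crown guaranteed by (1); in particular it is never an endpoint of a cut-join, so it is alone in its $\sim$-class. Since $\compset$ contains another component, $\compset$ cannot sort together, and $\stg(\compset)=0$.

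For (2)--(4): I would run a bookkeeping argument over the operations of a scenario, tracking for each component its \emph{lineage} and the set of original components from which it descends, and I would use two features of the list (a)--(h). First, no sorting operation \emph{produces} an $M$-shaped component or a nontrivial crown, so every $M$-shaped component and every nontrivial crown occurring in the scenario is one of the originals and descends from itself alone. Second, the only sorting operations joining extremities from two distinct components are (f) (an $M$-shaped and a $W$-shaped component), (g) (a nontrivial crown and an $N$-shaped component), and (h) (a nontrivial crown and a $W$-shaped component); of these, (f) turns its two inputs into two $N$-shaped components, (g) turns its inputs into one $N$-shaped component, and (h) absorbs the crown into the $W$-lineage --- while (a) and (b) are cut-joins internal to a single component, so enlarge no $\sim$-class, and (c), (d), (e) are not cut-joins, so record no instance of $\sim$. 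Following a $W$-lineage through the scenario (it may repeatedly absorb nontrivial crowns via (h) and it may ``explode'' \emph{once}, via a single application of (f) with one $M$-shaped component, into two $N$-shaped lineages, which thereafter can only absorb nontrivial crowns via (g)), and arguing analogously for $M$- and $N$-lineages, I would establish the key claim: in any sorting scenario on the descendants of $\compset$, the $\sim$-class of an original component of $\compset$ contains at most one $W$-shaped component of $\compset$, at most one $M$-shaped component of $\compset$, and, if it contains an $N$-shaped component of $\compset$, no other path component of $\compset$. Each of (2), (3), (4) forces all of $\compset$ into a single $\sim$-class, contradicting this claim, so $\stg(\compset)=0$.

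The main obstacle is the key claim. Establishing it calls for an induction over the operations of the scenario with a careful case analysis on the eight operation types of Observation~\ref{obs:SortingScenarios}: one must check that the descent set of each current component remains connected under the transitive closure of $\sim$, control how descent sets and $\sim$-classes change when a $W$- or $M$-shaped component splits into two $N$-shaped components and when crowns are absorbed, and verify in every case that a second $W$-shaped (respectively $M$-shaped) original component of $\compset$, or a path disjoint from an $N$-shaped original component of $\compset$, can never be pulled into the same class.
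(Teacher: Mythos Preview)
Your proposal is correct and takes essentially the same approach as the paper: both arguments rest on the facts, read off from Observation~\ref{obs:SortingScenarios} and Figure~\ref{fig:sortingops}, that no sorting operation produces an $M$-shaped component or a nontrivial crown, and that the only cross-component sorting operations (f), (g), (h) require one of these as an input. The paper dispatches cases (2)--(4) more tersely---for (2), it simply notes that a $W$-shaped component can satisfy $A\sim B$ only with an $M$ or a nontrivial crown and that a second $W$ can never become either---while your lineage/descent-set framework is a faithful unpacking of the bookkeeping the paper leaves implicit.
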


\begin{proof}
We refer to Observation~\ref{obs:SortingScenarios} and Figure~\ref{fig:sortingops} for the permissible sorting operations, from which the proof essentially follows. We provide full details below.
\begin{enumerate}
\item None of the sorting operations (a)--(h) sort a trivial crown with any other component. So if $\mathcal{B'}$ contains at least one trivial crown and any other component, then $\stg(\compset)=0$, as required.

\item  Suppose $\compset$ contains components $A$ and $B$, where $A$ is a $W$-shaped component.  Note that $A\sim B$ is only possible if $B$ is an $M$-shaped component (sorting operation (f)) or a nontrivial crown (sorting operation (h)). As seen in Figure~\ref{fig:sortingops} Diagram (II), sorting operations with $W$-shaped components cannot create an $M$-shaped component or a nontrivial crown. So, if $B$ is also a $W$-shaped component, any sequence of sorting operations applied to $A$ and $B$ will not yield components that can sort together. Therefore $\stg(\compset)=0$.

\item Suppose $\compset$ contains components $A$ and $B$, where $A$ is an $M$-shaped component. Note that $A\sim B$ is only possible if $B$ is a $W$-shaped component (sorting operation (f)) or a nontrivial crown  (sorting operation (d) followed by (g)). As seen in Figure~\ref{fig:sortingops} Diagram (II), sorting operations with $M$-shaped components cannot create a $W$-shaped component or a nontrivial crown. Therefore, if $B$ is also an $M$-shaped component, any sequence of sorting operations applied to $A$ and $B$ will not yield components that can sort together. Therefore $\stg(\compset)=0$.  Note that this can also be obtained by Corollary~\ref{cor:mwsymmetry} below.

\item Suppose $\compset$ contains components $A$ and $B$, where $A$ is an $N$-shaped component. Note that $A\sim B$ is only possible if $B$ is a nontrivial crown, using sorting operation (g). As seen in Figure~\ref{fig:sortingops} Diagram (II), sorting operations with any component that is a path cannot create a nontrivial crown. Therefore, if $B$ is a path, any sequence of sorting operations applied to $A$ and $B$ will not yield components that can sort together. Therefore $\stg(\compset)=0$.  \qedhere
\end{enumerate}
\end{proof}

Indeed, we can further strengthen Proposition~\ref{prop:invalidsort} to an equivalence statement as follows: 

\begin{theorem} \label{thm:invalidsort}
Given a set $\compset$ of components of some adjacency graph $G$, $\stg\p{\compset}=0$ if and only if $\compset$ has any of the following properties:
\begin{enumerate}
\item $\compset$ contains at least two components, at least one of which is a trivial crown. 
\item $\compset$ contains more than one $W$-shaped component.
\item $\compset$ contains more than one $M$-shaped component.
\item $\compset$ contains more than one path and at least one $N$-shaped component.
\end{enumerate}
\end{theorem}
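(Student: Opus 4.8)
The forward direction---that each of properties (1)--(4) forces $\stg(\compset)=0$---is exactly Proposition~\ref{prop:invalidsort}, so the task is the converse: if $\compset$ has none of properties (1)--(4), then $\stg(\compset)\neq 0$. I would prove this constructively, exhibiting for each such $\compset$ a single most parsimonious scenario that (i) uses $d(\compset)$ operations, (ii) never acts on a component outside $\compset$, and (iii) leaves the components of $\compset$ sorting together (i.e.\ inside a single ``sort together'' equivalence class). The first move is to reduce to a few cases. If $|\compset|=1$, then $\stg(\compset)=\mps(\compset)\geq 1$ since every genome admits a sorting scenario, and none of (1)--(4) can hold; so assume $|\compset|\geq 2$. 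Failing (1)--(4) then pins down the possibilities: $\compset$ has no trivial crown, at most one $W$-shaped component, at most one $M$-shaped component, and---if it contains an $N$-shaped component---it contains exactly one path, necessarily that $N$-shaped component, together with at least one nontrivial crown. Grouping these, every admissible $\compset$ with $|\compset|\geq 2$ falls into one of three families: (A) some nontrivial crowns and exactly one $N$-shaped component, with no $M$- or $W$-shaped component; (B) some nontrivial crowns, at most one $W$-shaped component, and no $M$- or $N$-shaped component; (C) exactly one $M$-shaped component, at most one $W$-shaped component, some nontrivial crowns, and no $N$-shaped component.

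In every family the strategy is the same: first merge all of $\compset$ into a single component (or, in family (C), into two components) using the ``two-component'' sorting operations (f), (g), (h) of Observation~\ref{obs:SortingScenarios}---precisely the operations whose cut-join step creates a $\sim$-link between distinct original components---and then finish by sorting that component with the ``one-component'' operations (a), (b), (c). Concretely: in family (A), repeatedly apply operation (g) to absorb each nontrivial crown into the $N$-shaped component, obtaining one $N$-shaped component whose extremities are exactly those of $\compset$, then sort it by operation (a). In family (B), if $\compset$ has a $W$-shaped component, absorb all the nontrivial crowns into it via operation (h) and then sort the resulting $W$-shaped component by operations (b) and (c); if $\compset$ has no $W$-shaped component (hence at least two nontrivial crowns, since $|\compset|\geq 2$), first turn one crown into a $W$-shaped component by operation (e) and proceed as before. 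In family (C), apply operation (f) to the $M$- and $W$-shaped components---or operation (d) to the $M$-shaped component if there is no $W$-shaped component---producing two $N$-shaped components; absorb every nontrivial crown into one of them by operation (g); and finally sort both $N$-shaped components by operation (a).

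Two things must then be checked for each construction. First, the scenario is most parsimonious: by Observation~\ref{obs:SortingScenarios} every operation used is a sorting operation, so it suffices to verify that the total number of operations equals $d(\compset)$, which I would do by tracking component sizes through the scenario using Equations~(\ref{eq:distance}) and~(\ref{eq:SubsetDistance}). The key accounting facts are that absorbing a nontrivial crown of size $c$ into a path (via (g) or (h)) yields a path whose size is $c$ larger and costs exactly one operation---matching the ``$+\,\#C_{\compset}$'' contribution of that crown to $d(\compset)$---while operations (d), (e), (f) preserve the total size of the components they act on and lower the crown or $M$ count appropriately, and sorting an $N$-shaped (resp.\ $W$-shaped) component of size $s$ costs exactly $s$ operations. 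Second, the components of $\compset$ end up sorting together: each cut-join we introduce joins an extremity of a not-yet-merged component of $\compset$ to an extremity of the current merged component, whose endpoints are always owned by already-merged members of $\compset$, so transitivity of $\sim$ finishes the job; and by construction no operation ever touches a component outside $\compset$.

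The part needing the most care is this size bookkeeping: confirming at each step that the chosen operation strictly decreases $d$ by exactly $1$, and that the totals come out to $d(\compset)$, requires being precise about how the size terms and the $\#C$, $\#T$ terms of Equations~(\ref{eq:distance})/(\ref{eq:SubsetDistance}) change under operations (d)--(h), particularly in the boundary cases where an $M$-shaped component has size $1$ or the absorption target is a trivial path of size $0$. A lesser point to verify is that operations (g) and (h) remain applicable throughout the merging phase---that the component being enlarged is still $N$-shaped (resp.\ $W$-shaped) and still has a telomere available to join onto---which follows immediately from the output types recorded in Observation~\ref{obs:SortingScenarios} and summarized in Figure~\ref{fig:sortingops}.
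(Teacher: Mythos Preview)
Your approach is correct and essentially the same as the paper's: both establish the converse by classifying the admissible $\compset$ into a short list of configurations and, for each, exhibiting an explicit scenario built from the sorting operations of Observation~\ref{obs:SortingScenarios} (your families (A)--(C) together with the $|\compset|=1$ case are just a regrouping of the paper's six types (i)--(vi)). One simplification worth noting: your distance-bookkeeping paragraph is unnecessary, since every operation you invoke is already a sorting operation and hence decreases $d$ by exactly~$1$, so a scenario that terminates with only trivial components automatically has length~$d(\compset)$.
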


\begin{proof}
Proposition~\ref{prop:invalidsort} established one direction of this proof. For the other direction, we will use a proof by contrapositive. 
Suppose that $\mathcal{B'}$ does not have any of the four properties listed in the theorem. Therefore, $\mathcal{B'}$ must be one of the following types: 
\begin{enumerate}
    \item [(i)] one $N$-shaped component and zero or more nontrivial crowns,
    \item [(ii)] one $W$-shaped component and zero or more nontrivial crowns,
    \item [(iii)] one or more nontrivial crowns,
    \item[(iv)] one $M$-shaped component and zero or more nontrivial crowns, 
    \item[(v)] one $M$-shaped component, one $W$-shaped component, and zero or more nontrivial crowns, or
    \item[(vi)] one trivial crown.
\end{enumerate}
We now use the operations in Observation~\ref{obs:SortingScenarios} to exhibit a sorting scenario for each type of $\mathcal{B'}$ which sorts the components together. This shows $\stg(\mathcal{B'}) >0$. 

\begin{enumerate}
    \item[(i)] As long as $\mathcal{B}'$ contains at least one nontrivial crown, then we use a cut-join operation, as in Observation~\ref{obs:SortingScenarios}(g) on the $N$-shaped component and one of the nontrivial crowns to create an $N$-shaped component. This step is repeated until there are no more nontrivial crowns. The single $N$-shaped component can be sorted using cut-join operations Observation~\ref{obs:SortingScenarios}(a).

    \item[(ii)] As long as $\mathcal{B}'$ contains at least one nontrivial crown, then we use a cut-join operation, as in Observation~\ref{obs:SortingScenarios}(h), to transform the $W$-shaped component and one of the nontrivial crowns into a single $W$-shaped component. This step can be repeated until there are no more nontrivial crowns. Then, the single $W$-shaped component can be sorted with a series of cut-join operations from Observation~\ref{obs:SortingScenarios}(b) followed by one join operation from Observation~\ref{obs:SortingScenarios}(c) when the $W$-shaped component is size $1$.

    \item[(iii)] We first cut one of the nontrivial crowns to produce a $W$-shaped component, as in Observation~\ref{obs:SortingScenarios}(e), and then we proceed as in (ii) above. 
    
    \item[(iv)] We first cut the $M$-shaped component to produce two $N$-shaped components, as in Observation~\ref{obs:SortingScenarios}(d), and then we complete the scenario as in (i) above. 

    \item[(v)] We  use Observation~\ref{obs:SortingScenarios}(f) to cut-join the $M$-shaped component and $W$-shaped component to produce two $N$-shaped components. Then we use a series of cut-join operations, as in Observation~\ref{obs:SortingScenarios}(a), to sort the smaller $N$-shaped component into trivial components. And finally, we follow (i) to sort the remaining $N$-shaped component with the (zero or more) nontrivial crowns. 

    \item[(vi)] $\mathcal{B'}$ is already sorted, so no operations are needed and $\stg(\mathcal{B'})=1$. \qedhere
\end{enumerate}
\end{proof}

Aside from the case where $\mathcal{B'}$ consists of just one trivial crown (case (vi) from the proof of Theorem~\ref{thm:invalidsort}), we list each of the other cases when $\stg(\mathcal{B}') \neq 0$ and define a function along with each  to simplify $\stg(\mathcal{C},M,W,N)$. Note that in what follows, $\mathcal{Z}_{\geq2}$ denotes the set of finite multisets of integers in which each integer is at least $2$, and in this paper we take $\mathbb{N}$ to include $0$.

\begin{itemize}
\item[(i)] The components are a single $N$-shaped component and zero or more nontrivial crowns. Define  $\stg_N:\mathcal{Z}_{\geq2}\times\mathbb{N}\to\mathbb{N}$ as $\stg_N(\mathcal{C},\eta)=\stg(\mathcal{C},\emptyset,\emptyset,\{\eta\})$. 
\item[(ii)] The components are a single $W$-shaped component and zero or more nontrivial crowns. Define $\stg_W:\mathcal{Z}_{\geq2}\times\mathbb{Z}^+\to\mathbb{N}$ as $\stg_W(\mathcal{C},w)=\stg(\mathcal{C},\emptyset,\{w\},\emptyset)$.
\item[(iii)] The components are one or more nontrivial crowns. Define $\stg_{C}:\mathcal{Z}_{\geq2} - \{ \emptyset\} \to\mathbb{N}$ as $\stg_C(\mathcal{C})=\stg(\mathcal{C},\emptyset,\emptyset,\emptyset)$.
\item[(iv)] The components are a single $M$-shaped component and zero or more nontrivial crowns. Define $\stg_M:\mathcal{Z}_{\geq2}\times\mathbb{Z}^+\to\mathbb{N}$ as $\stg_M(\mathcal{C},m)=\stg(\mathcal{C},\{m\},\emptyset,\emptyset)$.
\item[(v)] The components are a single $M$-shaped component, a single $W$-shaped component, and zero or more nontrivial crowns. 
 Define $\stg_{MW}:\mathcal{Z}_{\geq2}\times\mathbb{Z}^+\times\mathbb{Z}^+\to\mathbb{N}$ as $\stg_{MW}(\mathcal{C},m,w)=\stg(\mathcal{C},\{m\},\{w\},\emptyset)$.
To  allow us to gracefully take advantage of a symmetry between $M$-shaped and $W$-shaped components, we extend the domain of $\stg_{MW}$ to $\mathcal{Z}_{\geq2}\times\mathbb{N}\times\mathbb{N}$ by introducing the convention that $\stg_{MW}(\mathcal{C},m,w)=0$ when $m=0$ or $w=0$.
\end{itemize}

We recall the following lemma, which allows us to compute the number of sorting scenarios for a single component.

\begin{lemma}[{\cite[Lemma~2.9]{bailey2023complexity}}] \label{lem:SortOneComponent}$\,$
\begin{itemize}
    \item For all $\eta\in\mathbb{N}$, $\stg_N(\emptyset,\eta)=1$.
    \item For all $w\in\mathbb{Z}^+$, $\stg_W(\emptyset,w)=2^{w-1}$.
    \item For all $m\in\mathbb{Z}^+$, $\stg_M(\emptyset,m)=2^{m-1}$.
    \item For all $c\in\mathbb{Z}_{\geq2}$, $\stg_C(\st{c})= c\cdot 2^{c-1}$.
\end{itemize}
\end{lemma}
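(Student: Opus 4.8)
The plan is to handle each of the four bullet points of Lemma~\ref{lem:SortOneComponent} directly, using Observation~\ref{obs:SortingScenarios} to identify which sorting operations are available when the crown multiset $\mathcal{C}$ is empty. In this regime there is only a single nontrivial component, so the only relevant operations are (a) for an $N$-shaped component, (b) and (c) for a $W$-shaped component, (d) for an $M$-shaped component, and (e) for a nontrivial crown; no cut-join between two components is possible.

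First I would dispatch the $N$-shaped case. Here $d(\emptyset,\emptyset,\emptyset,\{\eta\}) = \eta$ by Equation~(\ref{eq:SubsetDistance}) (an $N$-shaped component contributes only its size). I claim the unique sorting scenario repeatedly applies operation (a): a cut-join on the nontrivial $N$-shaped component always produces one trivial crown and one strictly smaller $N$-shaped component, and by the classification in Figure~\ref{fig:sortingops}(II) no other operation on a lone $N$-shaped component is a sorting operation. One must check that at each step operation (a) is \emph{forced} — i.e. there is only one choice of adjacency to cut and telomere to join that decreases the distance — which follows from the structure of an odd path: the two ends of the path are the two extremities whose ``correct'' partners are telomeres, and sorting proceeds deterministically from one end. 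This gives $\stg_N(\emptyset,\eta)=1$.

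Next, the $W$-shaped case. A $W$-shaped component of size $w$ has $2w$ edges and $d(\emptyset,\emptyset,\{w\},\emptyset)=w$. The natural approach is a recurrence: by operation (b), a cut-join on a $W$-shaped component of size $\ge 2$ produces a trivial crown plus a $W$-shaped component of size $w-1$; when size $1$ is reached, operation (c) (a join) finishes it in one step. The key point is to count the number of \emph{distinct} cut-join operations (b) available at each stage: I expect exactly two such operations at each size $\ge 2$ (corresponding to the two ``ends'' of the even path whose endpoints both lie in $V_1$), and a unique finishing join at size $1$, yielding $\stg_W(\emptyset,w)=2^{w-1}$. An alternative, cleaner route is to invoke Lemma~\ref{lem:partcount} with the singleton partition, but since $\mathcal{B}'$ here has only one component the partition sum is trivial, so the honest combinatorial count of operation (b) choices is unavoidable — this is where the bulk of the work lies. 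The $M$-shaped case reduces to the $W$-shaped case: operation (d) is the only sorting operation on a lone $M$-shaped component, and it cuts the $M$-shaped component of size $m$ into two $N$-shaped components whose sizes sum to $m-1$; one then applies the $N$-shaped count (which is $1$ for each) together with a multinomial coefficient $\binom{m-1}{\text{sizes}}$ to interleave, and sums over the ways the cut can split the path. Verifying that this sum collapses to $2^{m-1}$ is the main computational step for this bullet; alternatively, one cites the $M$/$W$ symmetry (Corollary~\ref{cor:mwsymmetry}, referenced later) to reduce it to $\stg_W$.

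Finally, the crown case: a nontrivial crown of size $c$ (a $2c$-cycle) has $d(\emptyset,\emptyset,\emptyset,\emptyset)$ with a single crown equal to $c-1+1 = c$ by Equation~(\ref{eq:SubsetDistance}) — wait, rather $\mathrm{size}+\#C = c+1$; let me instead use Equation~(\ref{eq:distance}) or (\ref{eq:SubsetDistance}) directly to get $d(\{c\},\emptyset,\emptyset,\emptyset)=c+1$. The only sorting operation is (e), a cut turning the crown into a $W$-shaped component of size $c$; there are $2c$ edges in the cycle, hence $2c$ choices of adjacency to cut, and each yields a $W$-shaped component of size $c$. Then apply the $W$-shaped count: $\stg_C(\{c\}) = 2c \cdot \stg_W(\emptyset,c) = 2c\cdot 2^{c-1} = c\cdot 2^{c}$. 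Hmm — this gives $c \cdot 2^c$, not $c\cdot 2^{c-1}$; I would need to recheck whether all $2c$ cuts are genuinely distinct as \emph{operations} on $V_1$ (cutting an adjacency is indexed by adjacencies of $G_1$, i.e.\ by degree-$2$ vertices in $V_1$, of which a $2c$-cycle has exactly $c$), which resolves the discrepancy to $c \cdot 2^{c-1}$. Pinning down this counting subtlety — that operations are counted as actions on $V_1$, so a cycle of size $c$ admits $c$ cuts, not $2c$ — is the subtle point of the crown case and, I expect, the main place one could go wrong across the whole lemma.
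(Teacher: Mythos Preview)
Your proposal is essentially correct, and it matches the reasoning the paper uses. Note that the paper does not prove Lemma~\ref{lem:SortOneComponent} directly---it is cited from~\cite{bailey2023complexity}---but the counting arguments you outline are exactly the base cases of the recurrences the paper \emph{does} prove in Propositions~\ref{prop:recn}, \ref{prop:recw}, and~\ref{prop:recc}: those proofs assert, just as you do, that there is ``$1$ way'' to cut-join a lone $N$-shaped component, ``$2$ ways'' for a lone $W$-shaped component (one per telomere in $V_1$), and $c$ ways to cut a crown of size $c$. Your self-correction in the crown case is right for the right reason: a $2c$-cycle has $c$ vertices in $V_1$, hence $c$ adjacencies of $G_1$ to cut.

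Two small points. First, your description of the $N$-shaped case is slightly off: an odd path has \emph{one} endpoint in $V_1$ and one in $V_2$, so there is a single telomere of $G_1$ to work from, not two; the uniqueness of the sorting cut-join comes from the fact that only cutting the adjacency in $V_1$ second along the path (and joining the correct half to the telomere) produces a trivial crown---any other cut-join either creates a nontrivial crown or leaves the $N$-shaped component the same size, neither of which decreases distance. Second, for the $M$-shaped case your binomial-sum route does work cleanly: cutting at peak $i$ yields $N$-shaped components of sizes $i-1$ and $m-i$, the interleaving factor is $\binom{m-1}{i-1}$, and $\sum_{i=1}^{m}\binom{m-1}{i-1}=2^{m-1}$. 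Invoking Corollary~\ref{cor:mwsymmetry} is also legitimate and is how the paper itself handles $\stg_M$ throughout, since that corollary is proved independently of Lemma~\ref{lem:SortOneComponent}.
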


Our goal now is to enumerate the number of sorting scenarios in each of these cases. We first provide a recurrence relation for $\stg_{N}(\mathcal{C},\eta)$.

\begin{proposition}\label{prop:recn}
We have the following recurrence relations for $\stg_N$:
\[ \stg_{N}(\mathcal{C},\eta) = \begin{cases} 1 & : \mathcal{C} = \emptyset,  \eta = 0, \\\sum_{c \in \mathcal{C}} \left(2c \cdot \stg_{N}(\mathcal{C} - \{c\}, c)\right) & : \mathcal{C} \neq \emptyset, \eta = 0, \\\stg_{N}(\mathcal{C},\eta-1) + \sum_{c \in \mathcal{C}} \left(2c \cdot \stg_{N}(\mathcal{C} - \{c\}, \eta+c)\right)  & : \text{otherwise}.\end{cases}\]
\end{proposition}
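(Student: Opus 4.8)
\textbf{Proof proposal for Proposition~\ref{prop:recn}.}

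The plan is to prove the recurrence by conditioning on the \emph{last} sorting operation performed in a sorting scenario that sorts $\mathcal{B}'$ together, where $\mathcal{B}'$ consists of a single $N$-shaped component of size $\eta$ together with the nontrivial crowns of sizes in $\mathcal{C}$. The base case $\mathcal{C} = \emptyset$, $\eta = 0$ is immediate: a trivial path requires no operations, so $\stg_N(\emptyset, 0) = 1$, matching Lemma~\ref{lem:SortOneComponent}. For the inductive cases, recall from Theorem~\ref{thm:invalidsort} that such a $\mathcal{B}'$ indeed has $\stg(\mathcal{B}') \neq 0$, and that (by Observation~\ref{obs:SortingScenarios} and Proposition~\ref{prop:invalidsort}) the only sorting operations that can touch these components are: (a) a cut-join on the $N$-shaped component itself, which decreases its size by $1$ and spins off a trivial crown; (g) a cut-join merging a nontrivial crown with the $N$-shaped component, where the crown of size $c$ is cut and one of its extremities is joined to the unique telomere of the $N$-shaped component, producing a single $N$-shaped component of size $\eta + c$; and (e) a cut on a nontrivial crown turning it into a $W$-shaped component. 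The key structural point — which I would justify by appeal to Observation~\ref{obs:SortingScenarios} and Figure~\ref{fig:sortingops}, Diagram~(II) — is that once \emph{any} crown is cut into a $W$-shaped component (operation (e)), the set can no longer sort together, because a $W$-shaped component together with an $N$-shaped component cannot sort together (Proposition~\ref{prop:invalidsort}(4)), and no operation available to a $W$-shaped component recreates a crown or an $M$-shaped component. Hence in a scenario realizing $\stg_N$, operation (e) never occurs; every crown must be consumed via operation (g).

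\textbf{The case $\mathcal{C} \neq \emptyset$, $\eta = 0$.} Here the $N$-shaped component has size $0$, i.e.\ it is a trivial path with a single telomere, but there is still at least one nontrivial crown that must be sorted, so the scenario is nonempty. Consider the \emph{first} operation. Since operation (a) requires the $N$-shaped component to be nontrivial and operation (e) is forbidden, the first operation must be an instance of (g): pick a crown of size $c \in \mathcal{C}$, cut it and join an extremity to the telomere, producing a single $N$-shaped component of size $c$ (size $0 + c$). There are $c$ places on the cycle to cut and, for the fixed cut, $2$ ways to reglue the linear strand onto the telomere (the two orientations), giving a factor $2c$; this count of $2c$ I would verify directly from the structure of operation (g) as drawn in Figure~\ref{fig:sortingops}. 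After this operation the remaining task is exactly to sort together an $N$-shaped component of size $c$ with the crowns of sizes in $\mathcal{C} - \{c\}$, which by definition contributes $\stg_N(\mathcal{C} - \{c\}, c)$ scenarios. Summing over the choice of $c \in \mathcal{C}$ (as a multiset sum, so repeated sizes are counted with multiplicity) yields $\stg_N(\mathcal{C}, 0) = \sum_{c \in \mathcal{C}} 2c \cdot \stg_N(\mathcal{C} - \{c\}, c)$.

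\textbf{The general case.} When $\eta \geq 1$ (and $\mathcal{C}$ arbitrary), condition on the first operation. Operation (e) is again forbidden. Either the first operation is an instance of (a) on the $N$-shaped component — decreasing its size to $\eta - 1$ and producing a trivial crown which plays no further role — after which we must sort an $N$-shaped component of size $\eta - 1$ with the crowns of $\mathcal{C}$, contributing $\stg_N(\mathcal{C}, \eta - 1)$; or the first operation is an instance of (g) consuming a crown of size $c \in \mathcal{C}$, with $2c$ choices as above, leaving an $N$-shaped component of size $\eta + c$ to be sorted with $\mathcal{C} - \{c\}$, contributing $\stg_N(\mathcal{C} - \{c\}, \eta + c)$. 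These possibilities are mutually exclusive and exhaust all first operations, so $\stg_N(\mathcal{C}, \eta) = \stg_N(\mathcal{C}, \eta - 1) + \sum_{c \in \mathcal{C}} 2c \cdot \stg_N(\mathcal{C} - \{c\}, \eta + c)$, as claimed. The main obstacle is the structural lemma asserting that operation (e) can be excluded from any $\stg_N$-scenario and, relatedly, that there is no double-counting or hidden interleaving issue — i.e.\ that "condition on the first operation" genuinely partitions the scenarios and that the suffix of a scenario, after the first operation, is itself an unconstrained sorting-together scenario for the smaller instance (which requires noting that the ``sort together'' relation is unaffected by which operation was performed first, since the remaining components are connected to one another through later cut-joins exactly as needed). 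I would make this rigorous by observing that removing the first operation from a sorting-together scenario for $\mathcal{B}'$ of length $d(\mathcal{B}')$ yields a length-$(d(\mathcal{B}') - 1)$ scenario for the resulting components which must be a sorting-together scenario by the distance formula~\eqref{eq:SubsetDistance} (any shorter scenario for the suffix would contradict $d(\mathcal{B}')$), and conversely prepending a valid first operation to any sorting-together scenario for the smaller instance gives one for $\mathcal{B}'$.
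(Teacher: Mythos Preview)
Your proposal is correct and follows essentially the same approach as the paper: condition on the first sorting operation, noting that operation (a) contributes the $\stg_N(\mathcal{C},\eta-1)$ term (only when $\eta\geq 1$) and operation (g) on a crown of size $c$ contributes the factor $2c$ (from $c$ cut locations and $2$ join choices). Two small remarks: your opening sentence says you will condition on the \emph{last} operation, but the body correctly conditions on the \emph{first}; and you are in fact more careful than the paper in explicitly ruling out operation (e) via Proposition~\ref{prop:invalidsort}(4), a point the paper's proof leaves implicit.
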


\begin{proof}
First, $\stg_N(\emptyset,0)$ is the case of a single path of size $0$ (a single edge). This represents that the given  telomere has already been sorted, so there is only one way to sort this component (do nothing). On the other hand, when sorting a collection of nontrivial crowns together with an $N$-shaped component, the first operation either consists of applying a cut-join on the $N$-shaped component alone if it has size greater than $0$ ($1$ way to do this), or applying a cut-join of a nontrivial crown to the $N$-shaped component. If the crown being operated on has size $c$, there are $c$ possible places to cut it, and there are $2$ possible ways to join it to to the $N$-shaped component (either telomere of the newly cut crown). The result of this operation is one fewer crown and an increase in the size of the $N$-shaped component by $c$. These considerations lead to the recursive formulas for $\stg_N(\mathcal{C},\eta)$.
\end{proof}

We next provide a recurrence relation for $\stg_{W}(\mathcal{C},w)$.

\begin{proposition}\label{prop:recw}
We have the following recurrence relations for $\stg_W$:
\[
\stg_{W}(\mathcal{C},w) = \begin{cases}
    1 & \hspace{-0.05in}: \mathcal{C} = \emptyset, w = 1, \\
    \sum_{c \in \mathcal{C}} \left(4c \cdot \stg_{W}(\mathcal{C} - \{c\}, c + 1) \right) & \hspace{-0.05in} : \mathcal{C} \neq \emptyset, w = 1, \\
    2 \cdot \stg_{W}(\mathcal{C},w-1) + \sum_{c \in \mathcal{C}} \left(4c \cdot \stg_{W}(\mathcal{C} - \{c\}, w+c)\right)  & : \text{otherwise}.
\end{cases}
\]
\end{proposition}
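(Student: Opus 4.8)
The plan is to mirror the structure of the proof of Proposition~\ref{prop:recn}, analyzing the first operation in any sorting scenario for a set of components $\compset$ consisting of one $W$-shaped component of size $w$ together with zero or more nontrivial crowns with sizes forming the multiset $\mathcal{C}$. By Observation~\ref{obs:SortingScenarios} and the sort-together constraint, the only sorting operations that can legally be the first move in such a scenario (while keeping everything sorting together) are: a cut-join on the $W$-shaped component alone (Observation~\ref{obs:SortingScenarios}(b)), available only when $w \geq 2$; a join on the $W$-shaped component alone (Observation~\ref{obs:SortingScenarios}(c)), available only when $w = 1$, in which case the component becomes a trivial crown and we are done; or a cut-join combining one of the nontrivial crowns with the $W$-shaped component (Observation~\ref{obs:SortingScenarios}(h)), available whenever $\mathcal{C} \neq \emptyset$. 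Any operation that acts only on a crown (Observation~\ref{obs:SortingScenarios}(e), cutting a crown to a $W$-shaped component) would produce a second $W$-shaped component, and by Theorem~\ref{thm:invalidsort}(2) such a configuration has $\stg = 0$; so that branch contributes nothing and need not appear in the recurrence.

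I would organize the argument by the three cases of the piecewise formula. \textbf{Base case} ($\mathcal{C} = \emptyset$, $w = 1$): the only component is a $W$-shaped component of size $1$, which is sorted by a single join operation, and this join is forced, so $\stg_W(\emptyset, 1) = 1$; this also matches Lemma~\ref{lem:SortOneComponent} since $2^{1-1}=1$. \textbf{Case $\mathcal{C} \neq \emptyset$, $w = 1$}: here the standalone join is still available, but it produces a trivial crown, leaving behind the nonempty collection of crowns, which by Theorem~\ref{thm:invalidsort}(1) can no longer sort together with that trivial crown — so this move cannot start a valid sort-together scenario and is excluded. Hence the first move must be a cut-join of some crown of size $c \in \mathcal{C}$ onto the $W$-shaped component: there are $c$ choices of where to cut the crown and $2$ choices of extremity to join, and I must also account for the two telomeres of the $W$-shaped component available for the join, giving the factor $4c$; the result is the multiset $\mathcal{C} - \{c\}$ of crowns together with a $W$-shaped component whose size is now $c + 1$ (the crown contributes $c$ edges and the join/cut bookkeeping raises the size by one — I'd double-check this size arithmetic carefully via Definition~\ref{defn:size} and Equation~(\ref{eq:SubsetDistance})). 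Summing over $c \in \mathcal{C}$ yields the stated formula. \textbf{Otherwise} ($w \geq 2$): both the standalone cut-join on the $W$-shaped component (Observation~\ref{obs:SortingScenarios}(b)) and the crown-merging cut-join are available. The standalone cut-join: it produces a trivial crown and a $W$-shaped component of size $w - 1$, and there are $2$ ways to perform it (the cut location and the rejoining are constrained so that exactly two distinct outcomes arise — this is the same factor of $2$ underlying $\stg_W(\emptyset,w) = 2^{w-1}$ in Lemma~\ref{lem:SortOneComponent}), and the remaining components $\mathcal{C}$ together with the new $W$-shaped component still sort together (the trivial crown splits off). This contributes $2 \cdot \stg_W(\mathcal{C}, w-1)$. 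The crown-merging moves contribute $\sum_{c \in \mathcal{C}} 4c \cdot \stg_W(\mathcal{C} - \{c\}, w + c)$ exactly as before. Adding these gives the third case.

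The main obstacle I anticipate is pinning down the exact multiplicities — the factor $4c$ for the crown-merge and the factor $2$ for the standalone $W$ cut-join — and making rigorous that these counts are correct and that distinct first-move choices lead to genuinely distinct (and exhaustively enumerated) scenarios. Concretely, for a crown of size $c$ (a $2c$-cycle) there are $c$ adjacency-vertices that can be cut, and cutting one produces a $W$-shaped path with two telomere endpoints; joining one of those to one of the two telomeres of the existing $W$-shaped component is what realizes Observation~\ref{obs:SortingScenarios}(h). I need to verify that all $2 \cdot 2 = 4$ telomere-pairings are legitimate and yield distinct results (equivalently, that the $W$-shaped component's two ends are always distinguishable, which holds since it is an \emph{even} path with both endpoints in $V_1$), giving $4c$ total. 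For the standalone move I must confirm the size drops to exactly $w-1$ and that precisely $2$ such operations exist, consistent with the single-component count. Establishing that these first-move classes partition the set of all sort-together scenarios — so that the recursive sum is neither over- nor under-counting — will require a short but careful argument appealing to Observation~\ref{obs:SortingScenarios}, Theorem~\ref{thm:invalidsort}, and the definition of ``sort together'' (Definition~\ref{def:equivrelation}), exactly as in the proof of Proposition~\ref{prop:recn}.
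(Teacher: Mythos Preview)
Your proposal is correct and follows essentially the same argument as the paper: case analysis on the first sorting operation, yielding the factor $2$ for the standalone cut-join on the $W$-shaped component (when $w\geq 2$) and the factor $4c$ for cut-joining a crown of size $c$ onto the $W$-shaped component. Your treatment is in fact more explicit than the paper's in two places: you justify why the join (Observation~\ref{obs:SortingScenarios}(c)) is disallowed as a first move when $w=1$ but $\mathcal{C}\neq\emptyset$, and you rule out cutting a crown alone (Observation~\ref{obs:SortingScenarios}(e)) via Theorem~\ref{thm:invalidsort}(2); the paper simply asserts the enumeration of first moves without spelling out these exclusions.
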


\begin{proof}
First, $\stg_W(\emptyset,1)$ is the case of a single $W$-shaped component of size $1$. To sort such a path, join the telomeres in the top genome. So, there is only one way to sort this component. On the other hand, when sorting a collection of nontrivial crowns together with a $W$-shaped component, the first operation either consists of a cut-join of one end of the $W$-shaped component if it has size greater than $1$ ($2$ ways to do this) or a cut-join of a nontrivial crown to the $W$-shaped component. If the crown being operated on has size $c$, there are $c$ possible places to cut it, and there are $4$ possible ways to join it to to the $W$-shaped component (either telomere of the newly cut crown could join with either telomere of the $W$-shaped component). The result of this operation is one fewer nontrivial crown and an increase in the size of the $W$-shaped component by $c$. These considerations lead to the recursive formulas for $\stg_W(\mathcal{C},w)$.
\end{proof}

We next provide an expression for $\stg_{C}(\mathcal{C})$.

\begin{proposition}\label{prop:recc}
We have the following expression for $\stg_C$:
\[
\stg_C(\mathcal{C})=   
\sum_{c\in \mathcal{C}}\left(c\cdot \stg_W(\mathcal{C}-\st{c},c)\right). 
\]
\end{proposition}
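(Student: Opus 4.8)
The plan is to condition on the \emph{first} operation of a most parsimonious scenario that sorts the crowns of $\mathcal{C}$ together. Since $\mathcal{C}$ is nonempty and every component in it is a nontrivial crown, I first observe that among the sorting operations (a)--(h) of Observation~\ref{obs:SortingScenarios}, the only one applicable to this configuration is~(e): operations (a)--(d) and~(f) require a path or an $M$-shaped component, (g) requires an $N$-shaped component, and (h) requires a $W$-shaped component, and none of these is present at the start. Because any length-$d(\mathcal{C})$ sequence transforming the components of $\mathcal{C}$ into trivial ones must consist exclusively of sorting operations (the same minimality argument that shows a scenario is a sorting scenario iff it uses only sorting operations), its first operation is therefore a cut of type~(e) applied to one of the crowns of $\mathcal{C}$.

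Next I would count the choices for this first operation and identify the resulting configuration. A nontrivial crown of size $c$ is an even cycle with $2c$ edges, hence has exactly $c$ vertices in $V_1$, each of which is an adjacency; cutting any one of them is a valid type-(e) operation and yields a path with $2c$ edges and both endpoints in $V_1$, i.e.\ a $W$-shaped component of size $c$. Using Equation~(\ref{eq:distance}) (or simply the fact that (e) is a sorting operation), $d$ drops by exactly $1$, so a most parsimonious completion of the scenario is precisely a most parsimonious scenario for the configuration consisting of the new $W$-shaped component of size $c$ together with the crowns of $\mathcal{C}-\{c\}$. The key point here — and the step I expect to require the most care — is that the crowns of $\mathcal{C}$ sort together in the full scenario \emph{if and only if} the newly created $W$-shaped component sorts together with the remaining crowns in the completion: a cut creates no new $\sim$-relation, and every extremity belonging to the cut crown is an extremity of the new $W$-shaped component, so for any crown $C'\in\mathcal{C}-\{c\}$ we have $C'\sim(\text{cut crown})$ in the full scenario exactly when, in the completion, $C'$ is joined to an extremity of the $W$-shaped component, while the $\sim$-relations among the crowns of $\mathcal{C}-\{c\}$ are identical in the two settings. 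Hence the number of such completions is exactly $\stg_W(\mathcal{C}-\{c\},c)$.

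Finally I would assemble the count. Distinct choices of which crown of $\mathcal{C}$ to cut (taken with multiplicity, since distinct crowns of the same size are cut at distinct adjacencies of $G_1$), distinct choices among the $c$ cut positions within a fixed crown of size $c$, and distinct completions all produce distinct scenarios, and every scenario counted by $\stg_C(\mathcal{C})=\stg(\mathcal{C},\emptyset,\emptyset,\emptyset)$ arises in exactly one of these ways. Since the completion count $\stg_W(\mathcal{C}-\{c\},c)$ depends only on the size $c$ of the crown that was cut, summing over $\mathcal{C}$ gives
$\stg_C(\mathcal{C})=\sum_{c\in\mathcal{C}} c\cdot\stg_W(\mathcal{C}-\{c\},c)$, as claimed. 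Everything except the ``sort together'' equivalence in the second paragraph is routine bookkeeping against Observation~\ref{obs:SortingScenarios} and Equation~(\ref{eq:distance}); the equivalence is where the real argument lies, and one can sanity-check the final formula against $\stg_C(\{c\})=c\cdot\stg_W(\emptyset,c)=c\cdot 2^{c-1}$ from Lemma~\ref{lem:SortOneComponent}.
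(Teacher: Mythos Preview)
Your proposal is correct and follows essentially the same approach as the paper: condition on the first operation, observe that only a type-(e) cut is available, count the $c$ cut positions on a crown of size $c$, and identify the resulting configuration with $\stg_W(\mathcal{C}-\{c\},c)$. Your treatment is in fact more careful than the paper's, which states the argument in three sentences and leaves the ``sort together'' equivalence implicit.
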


\noindent
Note that $\mathcal{C} = \emptyset$ is not in the domain of $\stg_{C}$, and so this expression is well-defined.

\begin{proof}
When a collection of crowns sorts together, the first operation must be a cut. If the crown being operated on has size $c$, there are $c$ possible places to cut it. The result of this operation is one fewer crown and a $W$-shaped component of size $c$. These considerations lead to the expression for $\stg_C(\mathcal{C})$. 
\end{proof}

\begin{remark}\label{rem:allcrowns}
A closed-form expression for $\stg_{C}(\mathcal{C})$ was previously established in~\cite[Corollary~3.2]{bailey2023complexity}. 
\end{remark}

We have already discussed $\stg_W(\mathcal{C},w)$. Below we establish tools to show $\stg_M(\mathcal{C},w)$ is the same as $\stg_W(\mathcal{C},w)$.
We have thus far considered $A(G_1,G_2)$, where operations act on $V_1$. We can also consider $A(G_2,G_1)$, which is the same as $A(G_1,G_2)$, but we now operate on $V_2$. Every component of $A(G_1,G_2)$ corresponds to a component in $A(G_2,G_1)$ (on the same vertices), though component types are not necessarily the same. For example, an $M$-shaped component $A_{12}$ in $A(G_1,G_2)$ corresponds to a $W$-shaped component $A_{21}$ in $A(G_2,G_1)$.

\begin{definition}
For an operation $\alpha$, define the reverse operation $\alpha^{rev}$ as follows. If $\alpha$ is a cut at adjacency $ab$, then $\alpha^{rev}$ is a join of $a$ and $b$. Similarly, the reverse of a join is a cut. Further, if $\alpha$ is a cut-join $ab,c$ to $ac,b$, then $\alpha^{rev}$ is a cut-join $ac,b$ to $ab,c$. For a sequence $\sigma$ of operations $\sigma_1,\dots,\sigma_k$, let the reverse sequence $\sigma^{rev}$ be $\sigma^{rev}_k,\dots,\sigma^{rev}_1$.  
\end{definition}

Observe that reversing an operation does not change the extremities operated on.
We now show that the reverse operation \textit{rev} preserves the sort together relation.

\begin{proposition} \label{prop:mwsymmetry}
For a sorting scenario $\sigma$ and components $A_{12}$ and $B_{12}$ in $A(G_1,G_2)$, suppose we have that $A_{12}\sim B_{12}$. Then, for $\sigma^{rev}$ and corresponding components $A_{21}$ and $B_{21}$ in $A(G_2,G_1)$, we have $A_{21}\sim B_{21}$. 
\end{proposition}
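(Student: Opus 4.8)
The plan is to show that a sorting scenario $\sigma$ for $A(G_1,G_2)$, when reversed, becomes a sorting scenario $\sigma^{rev}$ for $A(G_2,G_1)$, and that the ``$\sim$'' relation is literally the same relation read in the reversed scenario. First I would observe that $\sigma$ transforms $A(G_1,G_2)$ into the all-trivial adjacency graph (all $2$-cycles and single edges), which is exactly $A(G_2,G_2)$; reversing the operations one at a time therefore produces a sequence $\sigma^{rev}$ starting from the all-trivial graph and ending at $A(G_1,G_2)$, but now acted upon as $A(G_2,G_1)$. Since each reverse operation (cut$\leftrightarrow$join, cut-join$\leftrightarrow$cut-join) has the same length cost and $d(G_1,G_2)=d(G_2,G_1)$, the reversed sequence has the minimum possible length, so $\sigma^{rev}$ is indeed a sorting scenario for $G_2 \to G_1$. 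This is the routine bookkeeping part.

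The key conceptual step is the definition of $\sim$: $A_{12}\sim B_{12}$ in $\sigma$ means there is a cut-join step in $\sigma$ in which an extremity $a$ of $A_{12}$ and an extremity $b$ of $B_{12}$ are joined into an adjacency $ab$. I would argue that the corresponding reversed step in $\sigma^{rev}$ is the cut-join $ab, c \mapsto \ldots$ that \emph{splits} that adjacency (recalling the observation in the paper that reversing an operation does not change the extremities operated on). The subtlety is that $\sim$ as defined in Definition~\ref{def:equivrelation} is phrased in terms of the components \emph{at the moment the operation is applied}, and a cut-join in the original scenario that \emph{creates} adjacency $ab$ corresponds to a cut-join in the reverse scenario that has $ab$ present and destroys it. So I need to check that the extremity $a$ still belongs to the descendant of $A_{12}$ (now called $A_{21}$) and $b$ to the descendant of $B_{21}$ at that point in $\sigma^{rev}$, i.e., that tracking which vertices belong to which ``component lineage'' is consistent between $\sigma$ read forward and $\sigma^{rev}$ read forward. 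Here I would use that the underlying vertex sets of corresponding components in $A(G_1,G_2)$ and $A(G_2,G_1)$ are identical, and that the sequence of intermediate adjacency graphs visited by $\sigma^{rev}$ is exactly the reverse of the sequence visited by $\sigma$ — so the set of extremities forming any component at a given intermediate stage is the same object in both.

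I would then conclude: each individual witnessing cut-join for $A_{12}\sim B_{12}$ becomes a witnessing cut-join for $A_{21}\sim B_{21}$ in $\sigma^{rev}$, possibly after noting a direction-of-time reversal in how ``component lineage'' is tracked. Since the statement only asserts the one-step relation $\sim$ (not its transitive closure ``sort together''), one witnessing operation suffices and no induction on the number of merges is needed. The main obstacle I anticipate is purely definitional bookkeeping: being careful that Definition~\ref{def:equivrelation}'s phrase ``an extremity $a$ from $A$'' refers to the \emph{original} component in the adjacency graph, and verifying that the extremity-to-original-component assignment is preserved under passing from $A(G_1,G_2)$ to $A(G_2,G_1)$ and from $\sigma$ to $\sigma^{rev}$ — a fact that follows because the vertex sets of corresponding components coincide and reversing operations preserves the extremities acted on. No hard combinatorics should be required; the proof is a short structural argument.
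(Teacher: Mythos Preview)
There is a genuine gap. You correctly set up that $\sigma^{rev}$ is a sorting scenario for $A(G_2,G_1)$, but the step where you conclude ``each individual witnessing cut-join for $A_{12}\sim B_{12}$ becomes a witnessing cut-join for $A_{21}\sim B_{21}$ in $\sigma^{rev}$'' fails. Definition~\ref{def:equivrelation} requires a cut-join in which $a$ and $b$ are \emph{joined into an adjacency}. If $\sigma_i$ is the cut-join joining $a$ and $b$, then, as you yourself note, $\sigma^{rev}_i$ is the cut-join that \emph{cuts} the adjacency $ab$; it does not join $a$ to $b$, so it cannot witness $A_{21}\sim B_{21}$ no matter how carefully you track component lineage. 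Your ``direction-of-time reversal'' gesture does not rescue this: the relation $\sim$ for $\sigma^{rev}$ is defined by what gets joined in $\sigma^{rev}$, read forward.

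The paper's proof supplies exactly the missing idea. Since $\sigma^{rev}_i$ cuts $ab$, that adjacency must already be present when $\sigma^{rev}_i$ is applied. But at the start of $\sigma^{rev}$ the extremities $a\in A_{21}$ and $b\in B_{21}$ lie in distinct components of $A(G_2,G_1)$, so $ab$ is not an adjacency initially. Hence some \emph{earlier} step $\sigma^{rev}_j$ in the reversed scenario (i.e., with $j>i$) must have created the adjacency $ab$, and it is \emph{that} operation --- not $\sigma^{rev}_i$ --- which witnesses $A_{21}\sim B_{21}$. The argument is an existence argument for a different witnessing step, not a lineage-tracking argument about the same step reversed.
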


\begin{proof}
Let $\sigma$ be a sorting scenario for $A(G_1,G_2)$ that consists of a sequence of operations $\sigma_1, \ldots, \sigma_k$. Suppose further that $\sigma$ sorts two components $A_{12}$ and $B_{12}$ together such that $A_{12}\sim B_{12}$. Since $\sigma$ transforms $G_1$ into $G_2$, the sequence $\sigma^{rev}$ transforms $G_2$ into $G_1$. Thus, $\sigma^{rev}$ is a sorting scenario for $A(G_2,G_1)$. 

Let $A_{21}$ and $B_{21}$ denote the two components in $A(G_2,G_1)$ that correspond to $A_{12}$ and $B_{12}$. We will show that $A_{21}\sim B_{21}$ with respect to $\sigma^{rev}$. Since $A_{12}\sim B_{12}$ with respect to $\sigma$, there is some $i$ such that operation $\sigma_i$ is a cut-join operation which joins an extremity $a$ of $A_{12}$ with an extremity $b$ of $B_{12}$. Moreover, $\sigma^{rev}_i$ cuts the adjacency $ab$. Since $a$ and $b$ come from different components, they must have been joined through operation $\sigma^{rev}_j$ for some $j>i$ to create adjacency $ab$ (which is then cut by $\sigma^{rev}_i$). Thus, $A_{12}\sim B_{12}$ in $A(G_2,G_1)$ with respect to $\sigma^{rev}$, as required.
\end{proof}

\begin{corollary}\label{cor:mwsymmetry}
Let $\mathcal{B}_{12}'$ be a subset of the components of $A(G_1,G_2)$, and let $\mathcal{B}_{21}'$ be the corresponding subset of the components of $A(G_2,G_1)$. Then $\stg(\mathcal{B}'_{12})=\stg(\mathcal{B}'_{21})$.  Consequently, $\stg_M(\mathcal{C},m)=\stg_W(\mathcal{C},m)$, and $\stg_{MW}(\mathcal{C},m,w) = \stg_{MW}(\mathcal{C},w,m)$.
\end{corollary}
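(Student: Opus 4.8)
The plan is to deduce Corollary~\ref{cor:mwsymmetry} almost immediately from Proposition~\ref{prop:mwsymmetry} by establishing a bijection between sorting scenarios. First I would fix a subset $\mathcal{B}_{12}'$ of components of $A(G_1,G_2)$ and the corresponding subset $\mathcal{B}_{21}'$ of components of $A(G_2,G_1)$ (same underlying vertex sets, possibly different component types). The map $\sigma \mapsto \sigma^{rev}$ sends any scenario for $A(G_1,G_2)$ to a scenario for $A(G_2,G_1)$, and since $(\sigma^{rev})^{rev} = \sigma$, this map is an involution, hence a bijection on the set of all scenarios. I would first check that it restricts to a bijection on \emph{sorting} scenarios: reversing a sorting scenario of length $d(G_1,G_2)$ yields a scenario of the same length transforming $G_2$ into $G_1$, and since $d(G_1,G_2) = d(G_2,G_1)$, length-minimality is preserved. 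I would also note that reversal does not change which extremities are operated on, so a scenario confined to $\mathcal{B}_{12}'$ (no operation touching a component outside $\mathcal{B}_{12}'$) reverses to a scenario confined to $\mathcal{B}_{21}'$.

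Next I would argue that this bijection preserves the ``sort together'' property, which is where Proposition~\ref{prop:mwsymmetry} does the work. If all components of $\mathcal{B}_{12}'$ sort together under $\sigma$, then for any two components $A_{12}, B_{12} \in \mathcal{B}_{12}'$ in the same class of the transitive closure of $\sim$, there is a chain witnessing $A_{12} \sim \cdots \sim B_{12}$; applying Proposition~\ref{prop:mwsymmetry} to each adjacent pair in the chain shows the corresponding components in $A(G_2,G_1)$ are related under $\sigma^{rev}$, so by transitivity $A_{21}$ and $B_{21}$ sort together under $\sigma^{rev}$. Since $\mathcal{B}_{12}'$ being ``sorts together'' means it lies in a single equivalence class, the image $\mathcal{B}_{21}'$ does too. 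Applying the same argument to $\sigma^{rev}$ and using the involution property gives the converse, so the bijection $\sigma \mapsto \sigma^{rev}$ restricts to a bijection between the scenarios counted by $\stg(\mathcal{B}_{12}')$ and those counted by $\stg(\mathcal{B}_{21}')$. Counting both sides gives $\stg(\mathcal{B}_{12}') = \stg(\mathcal{B}_{21}')$.

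The two stated consequences then follow by unwinding definitions: an $M$-shaped component of size $m$ in $A(G_1,G_2)$ corresponds to a $W$-shaped component of size $m$ in $A(G_2,G_1)$ (an even path with endpoints in $V_2$ becomes an even path with endpoints in $V_1$, with the same number of edges hence the same size), while nontrivial crowns correspond to nontrivial crowns of the same size; so a set realizing $\stg_M(\mathcal{C},m) = \stg(\mathcal{C},\{m\},\emptyset,\emptyset)$ corresponds to a set realizing $\stg(\mathcal{C},\emptyset,\{m\},\emptyset) = \stg_W(\mathcal{C},m)$. Similarly, a set of components realizing $\stg_{MW}(\mathcal{C},m,w)$ — one $M$-shaped component of size $m$, one $W$-shaped of size $w$, crowns of sizes in $\mathcal{C}$ — corresponds in $A(G_2,G_1)$ to one $W$-shaped component of size $m$, one $M$-shaped of size $w$, and the same crowns, which realizes $\stg_{MW}(\mathcal{C},w,m)$; this handles the $m,w \geq 1$ case, and the convention $\stg_{MW}(\mathcal{C},m,w)=0$ when $m=0$ or $w=0$ makes the identity hold trivially in the remaining cases.

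I do not expect a serious obstacle here — the real content is already in Proposition~\ref{prop:mwsymmetry}. The one point requiring a little care is the transitivity step: Proposition~\ref{prop:mwsymmetry} is stated for a single relation $A_{12} \sim B_{12}$, so to handle a set of more than two components that sorts together I must pass through the transitive closure explicitly rather than invoking the proposition once. A second minor point worth spelling out is that reversal genuinely maps \emph{minimum-length} scenarios to minimum-length scenarios, i.e.\ that $d(G_1,G_2) = d(G_2,G_1)$ and, more locally, $d(\mathcal{B}_{12}') = d(\mathcal{B}_{21}')$; the latter follows since $M$- and $W$-shaped components of equal size contribute equally to the formula~(\ref{eq:SubsetDistance}) and crowns correspond to crowns, so the right-hand side of~(\ref{eq:SubsetDistance}) is unchanged.
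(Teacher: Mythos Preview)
Your proposal is correct and follows essentially the same approach as the paper: both use the reversal map $\sigma\mapsto\sigma^{rev}$ as an involution, invoke Proposition~\ref{prop:mwsymmetry} to show that ``sort together'' is preserved, and then read off the $\stg_M=\stg_W$ and $\stg_{MW}$ symmetry from the $M\leftrightarrow W$ correspondence under swapping $G_1$ and $G_2$. The only cosmetic difference is that the paper phrases the bijection as a pair of containments $\Sigma_{12}^{rev}\subseteq\Sigma_{21}$ and $\Sigma_{21}^{rev}\subseteq\Sigma_{12}$ yielding inequalities both ways, whereas you argue the bijection directly; you are also more explicit than the paper about the transitivity step and about $d(\mathcal{B}_{12}')=d(\mathcal{B}_{21}')$, which the paper leaves implicit.
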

\begin{proof}
Given a subset $\mathcal{B}'_{12}$ of components of $A(G_1,G_2)$, let $\mathcal{B}'_{21}$ be the set of corresponding components in $A(G_2,G_1)$. Now, let $\Sigma_{12}$ be the set of scenarios that sort the components of $\mathcal{B}'_{12}$ together, and let $\Sigma_{21}$ be the set of scenarios that sort the components of $\mathcal{B}'_{21}$ together. By definition, $\left|\Sigma_{12}\right|=\stg\p{\mathcal{B}'_{12}}$ and $\left|\Sigma_{21}\right|=\stg\p{\mathcal{B}'_{21}}$. Next, let $\Sigma_{12}^{rev}$ and $\Sigma_{21}^{rev}$ be the sets of reverse operations of the scenarios in $\Sigma_{12}$ and $\Sigma_{21}$ respectively. By Proposition~\ref{prop:mwsymmetry}, $\Sigma_{12}^{rev}\subseteq\Sigma_{21}$. Since reverse operations are involutions, we also obtain that $\Sigma_{21}^{rev}\subseteq\Sigma_{12}$. Since $\left|\Sigma_{12}\right|=\left|\Sigma_{12}^{rev}\right|$ and $\left|\Sigma_{21}\right|=\left|\Sigma_{21}^{rev}\right|$, this implies that $\left|\Sigma_{12}\right|\leq\left|\Sigma_{21}\right|$ and $\left|\Sigma_{21}\right|\leq\left|\Sigma_{12}\right|$. Together, these imply that $\left|\Sigma_{12}\right|=\left|\Sigma_{21}\right|$, meaning $\stg(\mathcal{B}'_{12})=\stg(\mathcal{B}'_{21})$, as required. 

We note that an $M$-shaped component in $A(G_{1}, G_{2})$ is a $W$-shaped component in $A(G_{2}, G_{1})$. Thus, we obtain that $\stg_{M}(\mathcal{C},m) = \stg_{W}(\mathcal{C},m)$ and $\stg_{MW}(\mathcal{C},m,w) = \stg_{MW}(\mathcal{C},w,m)$, as desired.
\end{proof}

By Corollary~\ref{cor:mwsymmetry}, $\stg_M(\mathcal{C},m)=\stg_W(\mathcal{C},m)$. However, here we present a different expression for $\stg_M(\mathcal{C},m)$, which will be useful later when we examine $\stg_{MW}$.  To simplify notation, we introduce a new definition. Denote:
\[
\text{sum}(\mathcal{C}) := \sum_{c \in \mathcal{C}} c.
\]

\begin{lemma}\label{lem:recm}
We have the following expression for $\stg_{M}$:

\begin{align*}
\stg_M(\mathcal{C},m)&=\sum_{\eta=0}^{m-1}\sum_{\mathcal{C}'\subseteq \mathcal{C}}\binom{\setsum(\mathcal{C}) +|\mathcal{C}|+m-1}{\setsum(\mathcal{C'}) +|\mathcal{C}'|+\eta}\stg_N(\mathcal{C}',\eta)\cdot \stg_N(\mathcal{C}-\mathcal{C}',m-\eta-1)\\
&+\sum_{c\in \mathcal{C}}\left(c\cdot \stg_{MW}(\mathcal{C}-\st{c},m,c)\right).
\end{align*}
\end{lemma}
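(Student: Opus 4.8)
## Proof Proposal

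The plan is to mirror the recurrence proofs for $\stg_N$, $\stg_W$, and $\stg_C$ (Propositions~\ref{prop:recn}, \ref{prop:recw}, \ref{prop:recc}) by conditioning on the \emph{first operation} in a sorting scenario for a collection $\compset$ consisting of one $M$-shaped component of size $m$ together with the nontrivial crowns whose sizes form the multiset $\mathcal{C}$. By Observation~\ref{obs:SortingScenarios} and the case analysis in the proof of Theorem~\ref{thm:invalidsort}(iv)--(v), the only sorting operations available at the first step are: (d), a cut of the $M$-shaped component producing two $N$-shaped components; or (h), a cut-join of a nontrivial crown with the $M$-shaped component. Wait — operation (h) is stated for a crown and a \emph{$W$}-shaped component. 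So the correct first-step options are: cut the $M$-shaped component (operation (d)), or cut-join a crown onto... actually a crown cannot be cut-joined directly onto an $M$-shaped component, since no listed operation does this. However, one can first cut a crown (operation (e)) to make a $W$-shaped component. Hmm. So in fact the genuinely available first operations on $\compset$ are: operation (d) on the $M$-shaped component, or operation (e) on one of the crowns. This is what the two summands should reflect: the first summand handles "first operation is (d)", and the second handles "first operation is (e) on a crown of size $c$."

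For the \textbf{first summand} (first operation is the cut (d) of the $M$-shaped component of size $m$): this operation is a cut of an adjacency inside the $M$-shaped path, and there are $m$ choices — wait, an $M$-shaped component of size $m$ has $2m$ edges and $m$ internal adjacencies in $V_2$... I need to recall: an $M$-shaped component is an even path with endpoints in $V_2$. Cutting it at one of its $V_2$-adjacencies splits it into two $N$-shaped components whose sizes $\eta$ and $m-\eta-1$ sum to $m-1$ (the cut consumes one unit of size). Actually, I expect there is no multiplicity here — each cut location gives a distinct ordered split $(\eta, m-\eta-1)$ with $\eta$ ranging $0$ to $m-1$, which matches the outer sum $\sum_{\eta=0}^{m-1}$ in the statement. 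After this first operation, we must sort the two resulting $N$-shaped components together with the crowns $\mathcal{C}$; but by Proposition~\ref{prop:invalidsort}(4), two $N$-shaped components cannot sort together, so each $N$-shaped component must sort together with a disjoint sub-collection of the crowns. Summing over which sub-multiset $\mathcal{C}' \subseteq \mathcal{C}$ goes with the first $N$-shaped component accounts for $\sum_{\mathcal{C}' \subseteq \mathcal{C}}$. The binomial coefficient is the shuffle count: the total number of remaining operations is $d(\compset) - 1 = (\setsum(\mathcal{C}) + |\mathcal{C}| + m) - 1$ (using Equation~(\ref{eq:SubsetDistance}): each crown contributes size plus $1$ for the $\#C$ term, the $M$-shaped component contributes size $m$, with no trivial crowns), and these split into $\setsum(\mathcal{C}') + |\mathcal{C}'| + \eta$ operations for the first group and the rest for the second, explaining $\binom{\setsum(\mathcal{C}) + |\mathcal{C}| + m - 1}{\setsum(\mathcal{C}') + |\mathcal{C}'| + \eta}$ multiplied by $\stg_N(\mathcal{C}',\eta) \cdot \stg_N(\mathcal{C} - \mathcal{C}', m - \eta - 1)$.

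For the \textbf{second summand} (first operation is a cut (e) of a crown of size $c \in \mathcal{C}$): there are $c$ places to cut it, producing a single $W$-shaped component of size $c$; the remaining task is to sort this $W$-shaped component together with the $M$-shaped component of size $m$ and the remaining crowns $\mathcal{C} - \{c\}$, which is exactly $\stg_{MW}(\mathcal{C} - \{c\}, m, c)$. Hence this contributes $\sum_{c \in \mathcal{C}} c \cdot \stg_{MW}(\mathcal{C} - \{c\}, m, c)$. The base case $\mathcal{C} = \emptyset$ reduces to $\stg_M(\emptyset, m)$; the first summand becomes $\sum_{\eta=0}^{m-1}\binom{m-1}{\eta}\stg_N(\emptyset,\eta)\stg_N(\emptyset,m-\eta-1) = \sum_{\eta=0}^{m-1}\binom{m-1}{\eta} = 2^{m-1}$, matching Lemma~\ref{lem:SortOneComponent}, and the second summand vanishes.

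The \textbf{main obstacle} I anticipate is justifying that these two families of first operations are exhaustive and mutually exclusive, and — more delicately — that the recursive decomposition after operation (d) is a \emph{bijection} onto pairs (sub-multiset $\mathcal{C}'$, scenario sorting $\mathcal{C}' \cup \{N\text{-shaped of size }\eta\}$ together, scenario sorting the complement together, shuffle of the two scenarios). The subtlety is that the two $N$-shaped components created by the cut might \emph{a priori} interact via a crown that later gets cut-joined to both; one must invoke Proposition~\ref{prop:invalidsort}(4) carefully at the level of the "sort together" equivalence classes to rule this out and to see that the partition of the crowns into $\mathcal{C}'$ and $\mathcal{C} - \mathcal{C}'$ is well-defined and uniquely recoverable from the scenario. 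One should also double-check the size/multiplicity bookkeeping for the cut of the $M$-shaped component (that there is no factor of $2$ and the sizes genuinely run over $0 \le \eta \le m-1$), and confirm via Equation~(\ref{eq:SubsetDistance}) that $d(\{N\text{-shaped of size }\eta\} \cup \mathcal{C}') = \setsum(\mathcal{C}') + |\mathcal{C}'| + \eta$, so that the multinomial/binomial shuffle factor is exactly as written.
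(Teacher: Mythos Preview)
Your proposal is correct and follows essentially the same approach as the paper's own proof: condition on the first sorting operation, which must be either a cut of the $M$-shaped component (operation (d), yielding the double sum over $\eta$ and $\mathcal{C}'$ via Lemma~\ref{lem:partcount}-style shuffle counting) or a cut of a nontrivial crown (operation (e), yielding the $\stg_{MW}$ sum). Your identified obstacles are exactly the checks the paper makes, and your base-case verification is a nice sanity check not explicitly in the paper; one small slip to fix when writing it up is that the adjacencies being cut in the $M$-shaped component lie in $V_1$, not $V_2$ (since operations act on $V_1$), though this does not affect your count of $m$ cut locations.
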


\begin{proof}
As we are sorting a collection of crowns with a single $M$ shaped component, we have the following cases for the initial sorting operation. \\ 

\noindent \textbf{Case 1: The initial sorting operation cuts a nontrivial crown.} Fix a nontrivial crown $C$ of size $c \in \mathcal{C}$. This initial cut  transforms $C$ into a $W$-shaped component. There are $c$ places to cut $C$. We now have a (possibly empty) collection $\mathcal{C} - \{c\}$ of nontrivial crowns, a single $M$ of size $m$, and a single $W$ of size $c$. There are $\stg_{MW}(\mathcal{C} - \{c\}, m,c)$ ways of sorting these components together. Summing over all such choices of which initial crown to cut yields the following number of sorting scenarios, as desired:
\[\sum_{c \in \mathcal{C}} c \cdot \stg_{MW}(\mathcal{C} - \{c\}, m,c).\]

\noindent \textbf{Case 2: The initial sorting operation does not cut a nontrivial crown.} We must cut the $M$-shaped component into two $N$-shaped components:  $N_{1}$ of size $\eta$ and $N_{2}$ of size $m-\eta-1$. By Theorem~\ref{thm:invalidsort}(4), $N_{1}$ must sort together with some (possibly empty) subset of nontrivial crowns, and $N_{2}$ must sort together with the remaining crowns.

In order to enumerate the desired sorting scenarios, we follow the strategy of the proof of Lemma~\ref{lem:partcount} found in \cite[Lemma~A.4]{bailey2023complexity}. 
Let $\mathcal{B'}$ be the collection of components after cutting the $M$-shaped component into $N_{1}$ and $N_{2}$, and let $\pi=(\pi_1,\pi_2)$ be a partition of $\mathcal{B}'$ where $\pi_i$ consists of $N_i$ and the crowns that sort together with $N_i$. From Equation (\ref{eq:SubsetDistance}) we have that:  
\[
d(\mathcal{B'}) = \setsum(\mathcal{C}) + |\mathcal{C}| + m - 1.
\] 

Now fix such a partition $\pi = (\pi_{1}, \pi_{2})$ as described above. Let $\mathcal{C}'$ denote the subset of $\mathcal{C}$ corresponding to the sizes of the crowns in $\pi_1$.
Observe the following:
\begin{itemize}
\item $d(\pi_{1}) = \setsum(\mathcal{C'}) + |\mathcal{C}'| + \eta$
\item $d(\pi_{2}) = \setsum(\mathcal{C} - \mathcal{C'}) + |\mathcal{C} - \mathcal{C}'| + m-\eta-1$
\item $\stg(\pi_1)=\stg_N(\mathcal{C}',\eta)$
\item $\stg(\pi_2)=\stg_N(\mathcal{C}-\mathcal{C}',m-\eta-1)$.
\end{itemize}

 Recall that for $i \in \{1,2\}$, $\stg(\pi_i)$ counts the number of ways to sort only $\pi_i$. 
 Since the choice of ordering is independent of the scenario
for each part, we multiply by the binomial coefficient $\binom{d(\mathcal{B}')}{d(\pi_1), d(\pi_2)}$ and sum over all such partitions $(\pi_1, \pi_2)$ to obtain
 \[
\sum_{\pb{\pi_1,\pi_2}}\binom{d(\mathcal{B'})}{d(\pi_{1}), d(\pi_{2})}\stg\p{\pi_1}\stg\p{\pi_2}
\]
sorting scenarios. By the above, this equals
\[
\sum_{\mathcal{C}' \subseteq \mathcal{C}} \binom{\setsum(\mathcal{C}) + |\mathcal{C}| + m - 1}{\setsum(\mathcal{C'}) + |\mathcal{C}'| + \eta} \stg_{N}(\mathcal{C}', \eta) \cdot \stg_{N}(\mathcal{C} - \mathcal{C}', m-\eta-1)
\]
for fixed $\eta$.

Summing over all such $\eta$ yields the total number of sorting scenarios in this case:
\[
\sum_{\eta=0}^{m-1} \sum_{\mathcal{C}' \subseteq \mathcal{C}} \binom{\setsum(\mathcal{C}) + |\mathcal{C}| + m - 1}{\setsum(\mathcal{C'}) + |\mathcal{C}'| + \eta} \stg_{N}(\mathcal{C}', \eta) \cdot \stg_{N}(\mathcal{C} - \mathcal{C}', m-\eta-1).
\]

We add the counts together for Case 1 and Case 2 to obtain the desired number of sorting scenarios.
\end{proof}

It now remains to define a recurrence relation for $\stg_{MW}(\mathcal{C},m,w)$. We begin with the following lemma.

\begin{lemma}\label{lem:recmw}
For $m,w\geq1$, we have the following recurrence relation for $\stg_{MW}$:
\begin{align*} 
\hspace{-0.1in} \stg_{MW}(\mathcal{C},m,w) &= 2\cdot \stg_{MW}(\mathcal{C},m,w-1)+\sum_{c\in \mathcal{C}}\left(4c\cdot \stg_{MW}(\mathcal{C}-\st{c},m,w+c)\right)\\
&\hspace{-0.6in} +\sum_{\eta=0}^{m-1}4\cdot\sum_{\mathcal{C}'\subseteq \mathcal{C}}\binom{\setsum(\mathcal{C})+|\mathcal{C}|+m+w-1}{\setsum(\mathcal{C'}) +|\mathcal{C}'|+\eta}\stg_N(\mathcal{C}',\eta)\cdot \stg_N(\mathcal{C}-\mathcal{C}',m+w-\eta-1). 
\end{align*}
\end{lemma}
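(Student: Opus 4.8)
The strategy mirrors the case analyses used for $\stg_N$, $\stg_W$, and $\stg_M$: condition on the \emph{first} sorting operation applied when sorting a collection $\mathcal{C}$ of nontrivial crowns together with one $M$-shaped component (size $m$) and one $W$-shaped component (size $w$). By Observation~\ref{obs:SortingScenarios} and Figure~\ref{fig:sortingops}, the permissible first moves are: (1) a cut-join on one end of the $W$-shaped component, provided $w > 1$ --- but one should check that $w=1$ need not be special-cased here because of the third term; (2) a cut of a nontrivial crown $C$ of size $c \in \mathcal{C}$, turning it into a $W$-shaped component which must then sort together with the existing $W$ and $M$ --- wait, that would create two $W$-shaped components, which by Theorem~\ref{thm:invalidsort}(2) cannot sort together, so this move is \emph{not} available; instead the cut crown must be joined onto an existing component. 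The correct enumeration of "act on a crown" is a cut-join of a freshly-cuttable crown onto the $W$-shaped component: $c$ places to cut, $4$ ways to join (either new telomere to either telomere of $W$), giving the $\sum_{c} 4c\cdot\stg_{MW}(\mathcal{C}-\{c\}, m, w+c)$ term; and (3) a cut-join (operation (f)) of an adjacency in $M$ with a telomere in $W$, producing two $N$-shaped components, after which everything (the two $N$'s and all crowns) must be sorted, and by Theorem~\ref{thm:invalidsort}(4) the crowns are partitioned, each block sorting together with exactly one of the two $N$'s.

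\textbf{Handling the three terms.} For term (1): a cut-join on the $W$ end has $2$ choices and decreases $w$ by $1$; crucially, when $w=1$ this move is illegal, but $\stg_{MW}(\mathcal{C}, m, 0) = 0$ by the extended-domain convention, so the recurrence is uniformly valid for all $m,w\ge 1$. For term (2): as above, $\sum_{c\in\mathcal{C}} 4c\cdot\stg_{MW}(\mathcal{C}-\{c\}, m, w+c)$. For term (3): apply operation (f). There are (one should verify) exactly $4$ ways to perform this cut-join that splits $M$ of size $m$ at a chosen adjacency into $N_1$ of size $\eta$ and $N_2$ of size $m-\eta-1$ while absorbing the $W$-telomere --- the factor of $4$ coming from $2$ choices of which $W$-telomere and $2$ choices of orientation/which side of the cut $M$-adjacency the $W$ attaches to (this is the subtle counting point and the one place I'd be careful). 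Once $N_1, N_2$, and all the crowns remain, I follow the template of the proof of Lemma~\ref{lem:recm} Case 2 verbatim: write $\mathcal{B}'$ for the resulting collection, note $d(\mathcal{B}') = \setsum(\mathcal{C}) + |\mathcal{C}| + m + w - 1$ (the $+w-1$ replacing the $m$'s $-1$ slot appropriately, since the $W$ contributes $w$ to the sum of sizes but the operation (f) already performed one step), partition via $\pi=(\pi_1,\pi_2)$ with $\pi_1 \ni N_1$, $\pi_2 \ni N_2$, choose $\mathcal{C}'\subseteq\mathcal{C}$ for the crowns joining $N_1$, use $\stg(\pi_1) = \stg_N(\mathcal{C}',\eta)$, $\stg(\pi_2) = \stg_N(\mathcal{C}-\mathcal{C}', m+w-\eta-1)$, and invoke Lemma~\ref{lem:partcount} (the $\pi=(\pi_1,\pi_2)$ case) to get the multinomial $\binom{\setsum(\mathcal{C})+|\mathcal{C}|+m+w-1}{\setsum(\mathcal{C}')+|\mathcal{C}'|+\eta}$ weighting. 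Summing over $\eta \in \{0,\dots,m-1\}$, over $\mathcal{C}'\subseteq\mathcal{C}$, and inserting the factor $4$ yields the third term.

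\textbf{Main obstacle.} The routine part is the crown-cut and $W$-end-cut bookkeeping (the $2$ and $4c$ factors) and the appeal to Lemma~\ref{lem:partcount} for the partition-into-two-$N$'s count, which is a direct transcription of Lemma~\ref{lem:recm}'s Case 2. The genuinely delicate step is verifying the multiplicity of operation (f): one must argue carefully, using the extremity-level description of the cut-join in Observation~\ref{obs:SortingScenarios}(f), that a given unordered split of $M$ into sizes $\{\eta, m-\eta-1\}$ together with the attachment to $W$ is realized by exactly $4$ distinct operations (not $2$, not $8$), and that this factor is independent of $\eta$ and of how the crowns are later partitioned --- and also to confirm that no double-counting occurs between this term and term (1) when $w$ is small or between distinct values of $\eta$. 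I would also double-check that operation (f) is the \emph{only} first move that touches the $M$-shaped component: operations (d) and (g) could conceivably be relevant, but (d) cuts $M$ into two $N$'s \emph{without} involving $W$, which would strand the $W$-shaped component as a component that cannot sort together with two $N$'s (Theorem~\ref{thm:invalidsort}(4)) unless $W$ later merges --- so one must confirm this is genuinely forbidden as a \emph{first} move in a scenario where all components sort together, leaving (f) as the unique $M$-involving option.
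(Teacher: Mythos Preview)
Your proposal is correct and follows essentially the same three-case decomposition as the paper's proof: cut-join an end of the $W$ (Case~1), cut-join a crown onto the $W$ (Case~2), and apply operation~(f) to merge $M$ and $W$ into two $N$-shaped components (Case~3), with the third term then handled exactly as in Lemma~\ref{lem:recm} via Lemma~\ref{lem:partcount}. Your justification of the factor~$4$ (two cut positions yielding a standalone $M$-piece of size $\eta$, times two telomeres of $W$) is in fact cleaner than the paper's, which instead fixes a cut position and argues the four joining possibilities give equal counts by a symmetry that is stated somewhat tersely; and your explicit exclusion of operations~(c), (d), and~(e) as first moves via Theorem~\ref{thm:invalidsort} is more careful than the paper, which simply asserts the three cases are exhaustive.
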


\begin{proof}
When sorting a collection of nontrivial crowns together with an $M$-shaped component and a $W$-shaped component, we have the following cases for the initial sorting operation. \\

\noindent \textbf{Case 1: Cut-join one end of the $W$-shaped component.} First, this is impossible if the $W$-shaped component has size $1$. 
Accordingly, notice that $2 \cdot \stg_{MW}(\mathcal{C},m,w-1)=0$ when $w=1$. 
Otherwise, there are $2$ ways to do such a cut-join, one for each end of the $W$-shaped component. The result is a trivial crown (recall that trivial components don't require further sorting) and a decrease in the size of the $W$-shaped component by $1$. This establishes the $2 \cdot \stg_{MW}(\mathcal{C},m,w-1)$ term.

\noindent \\ \textbf{Case 2: Cut-join a nontrivial crown to the $W$-shaped component.} When applying a cut-join operation on a nontrivial crown of size $c$ and $W$-shaped component, there are $c$ possible places to cut the crown, and there are $4$ possible ways to join it to the $W$-shaped component (join either telomere of the newly cut crown to either telomere of the $W$-shaped component). The result is one fewer crown and an increase in the size of the $W$-shaped component by $c$. This establishes the term $\sum_{c\in \mathcal{C}}\left(4c\cdot \stg_{MW}(\mathcal{C}-\st{c},m,w+c)\right)$.

\noindent \\ \textbf{Case 3: Cut-join the $M$-shaped component and $W$-shaped component into two $N$-shaped components.} There are $m$ possible places that we can cut the $M$-shaped component; cutting at the $i^{\text{th}}$ peak from the left end of the path yields an $N$-shaped component of size $i-1$ from the left piece and an $N$-shaped component of size $m-i$ from the right piece. Then there are four possibilities for joining: the $N$-shaped component of size $i-1$ is joined to either of the two telomeres of the $W$-shaped component creating an $N$-shaped component of size $w+i-1$, or the $N$-shaped component of size $m-i$ is joined to either telomere of the $W$-shaped component creating an $N$-shaped component of size $w+m-i$. We will determine the number of sorting scenarios for one of these possibilities, and then show that all four possibilities yield the same number of sorting scenarios.

Consider the first possibility and let $N_1$ be the $N$-shaped component of size $\eta = w+i-1$, and $N_2$ be the $N$-shaped component of size $m+w-\eta-1$. Similarly to the proof in Lemma~\ref{lem:recm}, let $\mathcal{B'}$ be the collection of components in the adjacency graph after this cut-join occurred, and let $\pi = (\pi_1,\pi_2)$ be a partition of $\mathcal{B'}$ where $\pi_i$ consists of $N_i$ and the crowns that sort together with $N_i$. Then from Equation (\ref{eq:SubsetDistance}) we have that:  
\[
d(\mathcal{B'}) = \setsum(\mathcal{C}) + |\mathcal{C}| + m + w - 1.
\]
Fix a partition $\pi =(\pi_1,\pi_2)$, and let $\mathcal{C'}$ denote the sub-multiset of $\mathcal{C}$ corresponding to the sizes of the crowns in $\pi_1$. Then:
\begin{itemize}
\item $d(\pi_{1}) = \setsum(\mathcal{C'}) + |\mathcal{C}'| + \eta$
\item $d(\pi_{2}) = \setsum(\mathcal{C} - \mathcal{C'}) + |\mathcal{C} - \mathcal{C}'| + m+w-\eta-1$
\item $\stg(\pi_1)=\stg_N(\mathcal{C}',\eta)$
\item $\stg(\pi_2)=\stg_N(\mathcal{C}-\mathcal{C}',m+w-\eta-1)$.
\end{itemize}

Recall that for $i \in \{1,2\}$, $\stg(\pi_i)$ counts the number of ways to sort only $\pi_i$. Since the choice of ordering is independent of the scenario
for each part, we multiply by the binomial coefficient $\binom{d(\mathcal{B}')}{d(\pi_1), d(\pi_2)}$ and  sum over all such partitions $(\pi_1, \pi_2)$ to obtain 
 \[
\sum_{\pb{\pi_1,\pi_2}}\binom{d(\mathcal{B'})}{d(\pi_{1}), d(\pi_{2})}\stg\p{\pi_1}\stg\p{\pi_2}
\]
sorting scenarios, or equivalently
\[
\sum_{\mathcal{C}' \subseteq \mathcal{C}} \binom{\setsum(\mathcal{C}) + |\mathcal{C}| + m + w - 1}{\setsum(\mathcal{C'}) + |\mathcal{C}'| + \eta} \stg_{N}(\mathcal{C}', \eta) \cdot \stg_{N}(\mathcal{C} - \mathcal{C}', m + w -\eta - 1)
\]
for fixed $\eta$.

In the first possibility, we considered joining the $N$-shaped component of size $i-1$ to one of the telomeres of the $W$-shaped component. The second possibility, where we join to the other telomere of the $W$-shaped component, would yield the same number of sorting scenarios. Thus, the total number of sorting scenarios for our first two possibilities is:
\begin{align} \label{eq:PossibilityOne}
&2 \cdot \sum_{\mathcal{C}' \subseteq \mathcal{C}} \binom{\setsum(\mathcal{C}) + |\mathcal{C}| + m + w - 1}{\setsum(\mathcal{C'}) + |\mathcal{C}'| + \eta} \stg_{N}(\mathcal{C}', \eta) \cdot \stg_{N}(\mathcal{C} - \mathcal{C}', m + w -\eta - 1).   
\end{align}
Now consider the last two possibilities, where the $N$-shaped component of size $m-i$ is joined to either telomere of the $W$-shaped component. These are symmetric to the first two possibilities, which we can see by interchanging the roles of $i-1$ and $m-i$. So the number of sorting scenarios we obtain from these two possibilities is as in Equation~(\ref{eq:PossibilityOne}). Thus, the total number of sorting scenarios for all four possibilities is:
\[
4 \cdot \sum_{\mathcal{C}' \subseteq \mathcal{C}} \binom{\setsum(\mathcal{C}) + |\mathcal{C}| + m + w - 1}{\setsum(\mathcal{C'}) + |\mathcal{C}'| + \eta} \stg_{N}(\mathcal{C}', \eta) \cdot \stg_{N}(\mathcal{C} - \mathcal{C}', m + w -\eta - 1).
\]

Finally, we sum over all possible sizes of the initial two $N$-shaped components formed from the cut-join of the $M$-shaped component to the $W$-shaped component (i.e. we sum over all possible cutting locations), which gives the total number of sorting scenarios in this case:
\[
\sum_{\eta = 0}^{m-1} 4\cdot\sum_{\mathcal{C}' \subseteq \mathcal{C}} \binom{\setsum(\mathcal{C}) + |\mathcal{C}| + m + w - 1}{\setsum(\mathcal{C'}) + |\mathcal{C}'| + \eta} \stg_{N}(\mathcal{C}', \eta) \cdot \stg_{N}(\mathcal{C} - \mathcal{C}', m + w -\eta - 1).
\]
This gives the desired result.  
\end{proof}

A priori, to use Lemma~\ref{lem:recmw}, we have to track the case when an $M$-shaped component sorting with a $W$-shaped component results in two $N$-shaped components. This requires $O(2^{|\mathcal{C}|})$ steps to compute the double-summation in Lemma~\ref{lem:recmw}. The next proposition allows us to both avoid these steps and simplify our algorithm.

\begin{proposition}\label{prop:recmw}
For $m,w\geq1$, we have the following recurrence relation for $\stg_{MW}$:
\begin{align*}
\stg_{MW}(\mathcal{C},m,w)&=\stg_{MW}(\mathcal{C},m-1,w)+\stg_{MW}(\mathcal{C},m,w-1)+2\cdot \stg_{W}(\mathcal{C},m+w)\\
&\hspace{-0.8in}+\sum_{c\in \mathcal{C}}2c \,\biggr(\stg_{MW}(\mathcal{C}-\st{c},m+c,w) + 
\stg_{MW}(\mathcal{C}-\st{c},m,w+c) - \stg_{MW}(\mathcal{C}-\st{c},m+w,c)\biggr).
\end{align*} 
\end{proposition}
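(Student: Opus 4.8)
\textit{Proof proposal.} The plan is to derive the identity by \emph{symmetrizing} Lemma~\ref{lem:recmw}. That lemma treats the $W$-shaped component asymmetrically (the first operation cut-joins an end of the $W$, a crown to the $W$, or the $M$ to the $W$); applying it also in the ``mirrored'' form, where the roles of the $M$-shaped and $W$-shaped components are interchanged, and then averaging the two resulting identities, will force the two awkward double-summations to collapse into a single quantity that we can recognize from Lemma~\ref{lem:recm}.

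Concretely, write
\[
f(\eta):=\sum_{\mathcal{C}'\subseteq\mathcal{C}}\binom{\setsum(\mathcal{C})+|\mathcal{C}|+m+w-1}{\setsum(\mathcal{C}')+|\mathcal{C}'|+\eta}\,\stg_N(\mathcal{C}',\eta)\cdot\stg_N(\mathcal{C}-\mathcal{C}',m+w-\eta-1),
\]
so that Lemma~\ref{lem:recmw} reads $\stg_{MW}(\mathcal{C},m,w)=2\,\stg_{MW}(\mathcal{C},m,w-1)+\sum_{c\in\mathcal{C}}4c\,\stg_{MW}(\mathcal{C}-\{c\},m,w+c)+4\sum_{\eta=0}^{m-1}f(\eta)$. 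Since $f$ depends only on $\mathcal{C}$ and on $m+w$, applying Lemma~\ref{lem:recmw} to $\stg_{MW}(\mathcal{C},w,m)$ and then invoking $\stg_{MW}(\mathcal{C},x,y)=\stg_{MW}(\mathcal{C},y,x)$ from Corollary~\ref{cor:mwsymmetry} to rewrite each term back in the $(m,w)$-ordering yields the mirror identity $\stg_{MW}(\mathcal{C},m,w)=2\,\stg_{MW}(\mathcal{C},m-1,w)+\sum_{c\in\mathcal{C}}4c\,\stg_{MW}(\mathcal{C}-\{c\},m+c,w)+4\sum_{\eta=0}^{w-1}f(\eta)$.

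The key observation is that $f(\eta)=f(m+w-1-\eta)$: the substitution $(\eta,\mathcal{C}')\mapsto(m+w-1-\eta,\mathcal{C}\setminus\mathcal{C}')$ leaves the top index of the binomial coefficient fixed, sends its bottom index $k$ to $(\setsum(\mathcal{C})+|\mathcal{C}|+m+w-1)-k$, and merely swaps the two $\stg_N$ factors. Hence $\sum_{\eta=0}^{w-1}f(\eta)=\sum_{\eta=m}^{m+w-1}f(\eta)$, and because $\{0,\dots,m-1\}$ and $\{m,\dots,m+w-1\}$ are disjoint with union $\{0,\dots,m+w-1\}$ we obtain $\sum_{\eta=0}^{m-1}f(\eta)+\sum_{\eta=0}^{w-1}f(\eta)=\sum_{\eta=0}^{m+w-1}f(\eta)$. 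But this last sum is exactly the first summand in the expression for $\stg_M$ in Lemma~\ref{lem:recm}, evaluated with $m$ replaced by $m+w$; therefore
\[
\sum_{\eta=0}^{m+w-1}f(\eta)=\stg_M(\mathcal{C},m+w)-\sum_{c\in\mathcal{C}}c\,\stg_{MW}(\mathcal{C}-\{c\},m+w,c)=\stg_W(\mathcal{C},m+w)-\sum_{c\in\mathcal{C}}c\,\stg_{MW}(\mathcal{C}-\{c\},m+w,c),
\]
the last equality using $\stg_M(\mathcal{C},m+w)=\stg_W(\mathcal{C},m+w)$ from Corollary~\ref{cor:mwsymmetry}. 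Adding the two identities of the preceding paragraph, substituting this value of $\sum_{\eta=0}^{m-1}f(\eta)+\sum_{\eta=0}^{w-1}f(\eta)$, dividing by $2$, and merging the $c$-indexed sums into one gives precisely the claimed recurrence; the contribution $4\cdot\bigl(\sum_{\eta=0}^{m-1}f(\eta)+\sum_{\eta=0}^{w-1}f(\eta)\bigr)$, after division by $2$, supplies both the $2\,\stg_W(\mathcal{C},m+w)$ term and the $-\sum_{c\in\mathcal{C}}2c\,\stg_{MW}(\mathcal{C}-\{c\},m+w,c)$ term. The corner cases $m=1$ or $w=1$ need no separate treatment: $\stg_{MW}(\mathcal{C},m-1,w)$, $\stg_{MW}(\mathcal{C},m,w-1)$, and $\stg_{MW}(\mathcal{C}-\{c\},m+w,c)$ vanish whenever an argument equals $0$ by the stated convention, exactly matching the behavior of the corresponding terms in Lemma~\ref{lem:recmw}.

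I expect the only real obstacle to be the symmetry verification $f(\eta)=f(m+w-1-\eta)$ together with the bookkeeping that the two index ranges $\{0,\dots,m-1\}$ and (after reindexing) $\{m,\dots,m+w-1\}$ tile $\{0,\dots,m+w-1\}$ with neither overlap nor gap; once that is in place, the remainder is routine tracking of the factors of $2$ and $4$ when averaging the two recurrences.
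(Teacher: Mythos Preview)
Your proposal is correct and follows essentially the same route as the paper's proof: symmetrize Lemma~\ref{lem:recmw} via Corollary~\ref{cor:mwsymmetry}, reindex one of the two $\eta$-sums so that their ranges tile $\{0,\dots,m+w-1\}$, recognize the combined sum using Lemma~\ref{lem:recm} with $m$ replaced by $m+w$, and then collect terms. Your packaging via the auxiliary function $f(\eta)$ and its symmetry $f(\eta)=f(m+w-1-\eta)$ is a slightly tidier way of presenting the same substitution $(\eta,\mathcal{C}')\mapsto(m+w-1-\eta,\mathcal{C}\setminus\mathcal{C}')$ that the paper carries out explicitly, but the argument is the same in substance.
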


\begin{proof}

Let $m,w\ge 1$. Applying Corollary~\ref{cor:mwsymmetry} and Lemma~\ref{lem:recmw} below gives
\begin{align}\label{case4eqn}
\stg_{MW}(\mathcal{C},m,w)&=\frac{1}{2}\pb{\stg_{MW}(\mathcal{C},m,w)+\stg_{MW}(\mathcal{C},w,m)}\notag\\
&\hspace{-1in}=\stg_{MW}(\mathcal{C},m,w-1)+\sum\limits_{c\in\mathcal{C}}\left(2c\cdot \stg_{MW}(\mathcal{C}-\st{c},m,w+c)\right)\notag\\
&\hspace{-0.9in}+\sum_{\eta=0}^{m-1}2\cdot\sum_{\mathcal{C}'\subseteq \mathcal{C}}\binom{\setsum(\mathcal{C})+\ab{\mathcal{C}}+m+w-1}{\setsum(\mathcal{C'})+\ab{\mathcal{C}'}+\eta}\stg_N(\mathcal{C}',\eta)\cdot \stg_N(\mathcal{C}-\mathcal{C}',m+w-\eta-1)\notag\\
&\hspace{-0.9in}+\stg_{MW}(\mathcal{C},w,m-1)+\sum\limits_{c\in\mathcal{C}}\left(2c\cdot \stg_{MW}(\mathcal{C}-\st{c},w,m+c)\right)\notag\\
&\hspace{-0.9in}+\sum_{\eta=0}^{w-1}2\cdot\sum_{\mathcal{C}'\subseteq \mathcal{C}}\binom{\setsum(\mathcal{C})+\ab{\mathcal{C}}+w+m-1}{\setsum(\mathcal{C'})+\ab{\mathcal{C}'}+\eta}\stg_N(\mathcal{C}',\eta)\cdot \stg_N(\mathcal{C}-\mathcal{C}',w+m-\eta-1).
\end{align}

\noindent As we are summing over all subsets of $\mathcal{C}$, we can replace $\mathcal{C}'$ with $\mathcal{C}-\mathcal{C}'$ in the last summation of Equation (\ref{case4eqn}), yielding
\begin{align*}
    \sum_{\eta=0}^{w-1}2\cdot\sum_{\mathcal{C}'\subseteq \mathcal{C}}\binom{\setsum(\mathcal{C})+\ab{\mathcal{C}}+w+m-1}{\setsum(\mathcal{C}-\mathcal{C'})+\ab{\mathcal{C}-\mathcal{C}'}+\eta}\stg_N(\mathcal{C}-\mathcal{C}',\eta)\cdot\stg_N(\mathcal{C}',w+m-\eta-1).
\end{align*}

\noindent Using the identity $\binom{n}{k}=\binom{n}{n-k}$ and rearranging the order of multiplication, this summation becomes
\begin{align*}
    \hspace{-0.1in}&\sum_{\eta=0}^{w-1}2\cdot\sum_{\mathcal{C}'\subseteq \mathcal{C}}\binom{\setsum(\mathcal{C})+\ab{\mathcal{C}}+w+m-1}{\setsum(\mathcal{C'})+\ab{\mathcal{C}'}+w+m-\eta-1}\stg_N(\mathcal{C}',w+m-\eta-1)\cdot \stg_N(\mathcal{C}-\mathcal{C}',\eta).
\end{align*}

\noindent Note that we can reverse the order of this summation by replacing $w+m-\eta-1$ with $\eta$ and summing over $\eta$ from $m$ to $m+w-1$. This re-indexing changes the previous summation to
\begin{align*}
    \sum_{\eta=m}^{m+w-1}2\cdot\sum_{\mathcal{C}'\subseteq \mathcal{C}}\binom{\setsum(\mathcal{C})+\ab{\mathcal{C}}+m+w-1}{\setsum(\mathcal{C'})+\ab{\mathcal{C}'}+\eta}\stg_N(\mathcal{C}',\eta)\cdot \stg_N(\mathcal{C}-\mathcal{C}',m+w-\eta-1).
\end{align*}

\noindent Substituting this sum back into the last summation of equation (\ref{case4eqn}) and combining summations over $\eta$, we have
\begin{align}\label{case4eqn3}
\hspace{-0.1in}\stg_{MW}(\mathcal{C},m,w)&=\stg_{MW}(\mathcal{C},m,w-1)+\sum\limits_{c\in\mathcal{C}}\left(2c\cdot \stg_{MW}(\mathcal{C}-\st{c},m,w+c)\right)\notag\\
&\hspace{-1.0in}+\sum_{\eta=0}^{m-1}2\cdot\sum_{\mathcal{C}'\subseteq \mathcal{C}}\binom{\setsum(\mathcal{C})+\ab{\mathcal{C}}+m+w-1}{\setsum(\mathcal{C'})+\ab{\mathcal{C}'}+\eta}\stg_N(\mathcal{C}',\eta)\cdot \stg_N(\mathcal{C}-\mathcal{C}',m+w-\eta-1)\notag\\
&\hspace{-1.0in}+\stg_{MW}(\mathcal{C},w,m-1)+\sum\limits_{c\in\mathcal{C}}\left(2c\cdot \stg_{MW}(\mathcal{C}-\st{c},w,m+c)\right)\notag\\
&\hspace{-1.0in}+\sum_{\eta=m}^{m+w-1}2\cdot\sum_{\mathcal{C}'\subseteq \mathcal{C}}\binom{\setsum(\mathcal{C})+\ab{\mathcal{C}}+m+w-1}{\setsum(\mathcal{C'})+\ab{\mathcal{C}'}+\eta}\stg_N(\mathcal{C}',\eta)\cdot \stg_N(\mathcal{C}-\mathcal{C}',m+w-\eta-1)\notag\\
&\hspace{-1.1in}=\stg_{MW}(\mathcal{C},m,w-1)+\sum\limits_{c\in\mathcal{C}}\left(2c\cdot \stg_{MW}(\mathcal{C}-\st{c},m,w+c)\right)\notag\\
&\hspace{-1.0in}+\stg_{MW}(\mathcal{C},w,m-1)+\sum\limits_{c\in\mathcal{C}}\left(2c\cdot \stg_{MW}(\mathcal{C}-\st{c},w,m+c)\right)\notag\\
&\hspace{-1.0in}+\sum_{\eta=0}^{m+w-1}2\cdot\sum_{\mathcal{C}'\subseteq \mathcal{C}}\binom{\setsum(\mathcal{C})+\ab{\mathcal{C}}+m+w-1}{\setsum(\mathcal{C'})+\ab{\mathcal{C}'}+\eta}\stg_N(\mathcal{C}',\eta)\cdot \stg_N(\mathcal{C}-\mathcal{C}',m+w-\eta-1).
\end{align}
By rearranging the equation in Lemma~\ref{lem:recm} and multiplying it by 2, note that the last summation in (\ref{case4eqn3}) can be rewritten as
\begin{align*}
&\sum_{\eta=0}^{m+w-1}2\cdot\sum_{\mathcal{C}'\subseteq \mathcal{C}}\binom{\setsum(\mathcal{C})+\ab{\mathcal{C}}+m+w-1}{\setsum(\mathcal{C'})+\ab{\mathcal{C}'}+\eta}\stg_N(\mathcal{C}',\eta)\cdot \stg_N(\mathcal{C}-\mathcal{C}',m+w-\eta-1)\\
&\hspace{0.4in}=2\stg_M(\mathcal{C},m+w) -\sum\limits_{c\in\mathcal{C}}\left(2c\cdot \stg_{MW}(\mathcal{C}-\st{c},m+w,c)\right)\\
&\hspace{0.4in}=2\stg_W(\mathcal{C},m+w)-\sum\limits_{c\in\mathcal{C}}\left(2c\cdot \stg_{MW}(\mathcal{C}-\st{c},m+w,c)\right),
\end{align*}
where the second equality comes from Corollary~\ref{cor:mwsymmetry}.
Substituting this back into equation (\ref{case4eqn3}), rearranging terms, and noting that $\stg_{MW}(\mathcal{C},w,m-1)=\stg_{MW}(\mathcal{C},m-1,w)$ and $\stg_{MW}(\mathcal{C}-\st{c},w,m+c)=\stg_{MW}(\mathcal{C}-\st{c},m+c,w)$ by Corollary~\ref{cor:mwsymmetry}, yields the desired result as follows:
\begin{align*}
\stg_{MW}(\mathcal{C},m,w)=&\stg_{MW}(\mathcal{C},m,w-1)+\sum\limits_{c\in\mathcal{C}}\left(2c\cdot \stg_{MW}(\mathcal{C}-\st{c},m,w+c)\right)\\
&+\stg_{MW}(\mathcal{C},w,m-1)+\sum\limits_{c\in\mathcal{C}}\left(2c\cdot \stg_{MW}(\mathcal{C}-\st{c},w,m+c)\right)\\
&+2\stg_W(\mathcal{C},m+w)-\sum\limits_{c\in\mathcal{C}}\left(2c\cdot \stg_{MW}(\mathcal{C}-\st{c},m+w,c)\right)\\
=&\stg_{MW}(\mathcal{C},m-1,w)+\stg_{MW}(\mathcal{C},m,w-1)+ 2\cdot \stg_{W}(\mathcal{C},m+w)\\
&+\sum\limits_{c\in\mathcal{C}}2c\,\biggr(\stg_{MW}(\mathcal{C}-\st{c},m+c,w) + 
\stg_{MW}(\mathcal{C}-\st{c},m,w+c)\\ &- \stg_{MW}(\mathcal{C}-\st{c},m+w,c)\biggr).
\end{align*}
\end{proof}

We have the following corollary, which shows that the numbers $\stg_{MW}(\emptyset,m,w)$ satisfy a Pascal-like recurrence.
\begin{corollary}\label{cor:recmw}
    For all $m,w\geq1$,
    \[
    \stg_{MW}(\emptyset,m,w)=\stg_{MW}(\emptyset,m-1,w)+\stg_{MW}(\emptyset,m,w-1)+2^{m+w}.
    \]
\end{corollary}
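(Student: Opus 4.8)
The plan is to obtain the corollary directly as a specialization of Proposition~\ref{prop:recmw} to the case $\mathcal{C} = \emptyset$. Since $m, w \geq 1$, Proposition~\ref{prop:recmw} applies, and setting $\mathcal{C} = \emptyset$ makes every summation $\sum_{c \in \mathcal{C}}(\cdots)$ in that statement empty. What remains is
\[
\stg_{MW}(\emptyset,m,w) = \stg_{MW}(\emptyset,m-1,w) + \stg_{MW}(\emptyset,m,w-1) + 2\cdot \stg_{W}(\emptyset,m+w).
\]
To finish, I would invoke Lemma~\ref{lem:SortOneComponent}, which gives $\stg_{W}(\emptyset, m+w) = 2^{(m+w)-1}$ because $m+w \geq 2 \geq 1$; hence $2 \cdot \stg_{W}(\emptyset, m+w) = 2^{m+w}$, and substituting this in yields the claimed Pascal-like recurrence.

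The only points needing a moment of care are the boundary cases $m = 1$ and $w = 1$. When $m = 1$, the term $\stg_{MW}(\emptyset, m-1, w) = \stg_{MW}(\emptyset, 0, w)$ equals $0$ by the convention extending the domain of $\stg_{MW}$ introduced just after Theorem~\ref{thm:invalidsort}, and symmetrically when $w = 1$; this is exactly what one wants, since in these cases one of the two "paths" being peeled off degenerates. I would also remark that this same convention is already built into Proposition~\ref{prop:recmw} (and into Lemma~\ref{lem:recmw} before it), so no separate boundary argument is required beyond citing the convention.

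I do not anticipate a genuine obstacle: the statement is a clean corollary of an already-established recurrence, the only extra input being the closed form $\stg_{W}(\emptyset, k) = 2^{k-1}$ from Lemma~\ref{lem:SortOneComponent}. If a sanity check were desired, one could verify a small case directly — for instance $\stg_{MW}(\emptyset,1,1)$ should equal $2^{2} = 4$, matching the fact that a size-$1$ $M$-shaped component and a size-$1$ $W$-shaped component cut-join into two trivial paths in $4$ ways via Observation~\ref{obs:SortingScenarios}(f) — but this is not needed for the proof itself.
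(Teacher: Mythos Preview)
Your proposal is correct and follows essentially the same approach as the paper: specialize Proposition~\ref{prop:recmw} to $\mathcal{C}=\emptyset$ so that the sum over $c\in\mathcal{C}$ vanishes, then substitute $\stg_W(\emptyset,m+w)=2^{m+w-1}$ from Lemma~\ref{lem:SortOneComponent}. Your additional remarks on the boundary convention and the sanity check are fine but not needed; the paper's proof is just the two-line specialization.
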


\begin{proof}
If $\mathcal{C}=\emptyset$, the summation in Proposition~\ref{prop:recmw} is empty. So,
\[
\stg_{MW}(\emptyset,m,w)=\stg_{MW}(\emptyset,m-1,w)+\stg_{MW}(\emptyset,m,w-1)+2\cdot \stg_W(\emptyset,m+w).
\]
By Lemma~\ref{lem:SortOneComponent}, $\stg_W(\emptyset,m+w)=2^{m+w-1}$. The result now follows.
\end{proof}

We arrived at Corollary~\ref{cor:recmw} after our initial conference paper~\cite{BaileySWAT}. As a result of the insights gleaned, we were able to greatly simplify the statements of Lemma~\ref{lem:recmw} and Proposition~\ref{prop:recmw}, and the proof of Proposition~\ref{prop:recmw}.

\section{Fixed-Parameter Tractability of \algprobm{Pairwise Rearrangement}} \label{sec:FPT}

In this section, we will use the recurrences we have found to  establish our main result. 

\begin{thm:MainFPT}\label{thm:fpt}
In the Single Cut-and-Join model, \algprobm{Pairwise Rearrangement} is fixed-parameter tractable, with respect to the number of nontrivial components in the adjacency graph.
\end{thm:MainFPT}

\begin{proof}
Given two genomes with $n$ edges each, the adjacency graph is easily constructed in polynomial time. Let $\mathcal{B}$ be the set of components for this adjacency graph, and let $k$ denote the number of nontrivial components in $\mathcal{B}$. It suffices to show that the number of scenarios $\mps(\mathcal{B})$ can be determined in time $O(k\cdot 2^k \cdot B_k+k \cdot 2^{k} \cdot n^2)$, where $B_k$ is the $k^{\text{th}}$ Bell number. 

Let $\mathcal{C}, \mathcal{M}, \mathcal{W}, \mathcal{N}$ be multisets of nonnegative integers, denoting the sizes of the nontrivial crowns, $M$-shaped components, $W$-shaped components, and $N$-shaped components, respectively, of $\mathcal{B}$. Thus, we assume that every element of $\mathcal{C}$ is at least 2, and every element of $\mathcal{M}$ and $\mathcal{W}$ is at least 1. 

We will proceed in two stages. In the first stage, we will build up a table of values for $\stg_N(\mathcal{A},\eta)$, $\stg_W(\mathcal{A},w)$, $\stg_C(\mathcal{A})$, and $\stg_{MW}(\mathcal{A},m,w)$ with $\mathcal{A}$ any sub-multiset of $\mathcal{C}$, along with 
$\eta\geq0$, $w\geq1$, and $m\geq1$. Then in the second stage, we will use this final table of values along with Lemma~\ref{lem:partcount} to compute $\mps(\mathcal{C},\mathcal{M},\mathcal{W},\mathcal{N})$.

\noindent \\ \textbf{Stage 1.} When $\eta \geq 0$ we have by Lemma~\ref{lem:SortOneComponent} that $\stg_{N}(\emptyset, \eta) = 1$. Similarly, for $w \geq 1$, we have that $\stg_{W}(\emptyset, w) = 2^{w-1}$, which is computable in time $O(n).$ 

Now fix $m \in [n], w \in [n]$ with $m+w=i$, and suppose that for each $1 \leq m' \leq m$ and each $1 \leq w' \leq w$ with $m'+w'<i$, that we have computed $\stg_{MW}(\emptyset, m',w')$. Using these values, together with the fact that $\stg_{W}(\emptyset, w) = 2^{w-1}$, we may apply Corollary~\ref{cor:recmw} to compute $\stg_{MW}(\emptyset, m, w)$. There are at most $n^2$ pairs $(m',w') \in [n] \times [n]$, and so this step takes time $O(n^2)$.

Let $u$ be an integer $0<u \leq |\mathcal{C}|\leq k$. First, let $0<x\leq n$.
Suppose we have a table of values of $\stg_N(\mathcal{A},\eta)$ and $\stg_W(\mathcal{A},w)$ for all sub-multisets $\mathcal{A}$ of $\mathcal{C}$ with $|\mathcal{A}|<u$ and $\eta,w\leq n$ and also for all sub-multisets $\mathcal{A}$ of $\mathcal{C}$ with $|\mathcal{A}|=u$ and $\eta,w<x$. 
Let $\mathcal{C}'$ be a sub-multiset of $\mathcal{C}$ with $|\mathcal{C}'|=u$. 
We now proceed to fill in some of our lookup table from the bottom-up, using the recurrences from the previous section.

\begin{itemize}
    \item We may compute $\stg_{N}(\mathcal{C'},x)$ using Proposition~\ref{prop:recn}. There are at most $kn$ cells in our lookup table, and so this step takes time $O(kn)$. 

    \item We may compute $\stg_{W}(\mathcal{C'}, x)$  using Proposition~\ref{prop:recw}. There are at most $kn$ cells in our lookup table, and so this step takes time $O(kn)$.

    \item We now turn to computing $\stg_{C}(\mathcal{C'})$. To do so, we apply Proposition~\ref{prop:recc} using the previously computed values of $\stg_{W}(\mathcal{A},w)$ for all $\mathcal{A}$ of size less than $u$ and all $1 \leq w \leq n$. This step takes time $O(k)$. 
\end{itemize}

Now, let $x$ and $y$ be nonnegative integers with $x+y\leq n$. Suppose we have a table of values of $\stg_{MW}(\mathcal{A},m,w)$ for all sub-multisets $\mathcal{A}$ of $\mathcal{C}$ satisfying one of the following:
\begin{itemize}
    \item $|\mathcal{A}|<u$ for any nonnegative integers $m$ and $w$ with $m+w\leq n$, or 
    \item $|\mathcal{A}|=u$ with $m+w<x+y$.
\end{itemize}
We now turn to computing $\stg_{MW}(\mathcal{C'}, x, y)$. 
To do so, we apply Proposition~\ref{prop:recmw} using the previously computed values of $\stg_{W}(\mathcal{A},w)$ for all $1 \leq w \leq n$ and of $\stg_{MW}(\mathcal{A},m,w)$ for all sub-multisets $\mathcal{A}$ of $\mathcal{C}$ satisfying one of the conditions above.
There are at most $k n^2$ many such cells in our lookup table, and so filling in these cells takes time $O(k n^2)$.

To summarize, our first stage involves building up a table of values for $\stg_N(\mathcal{A},\eta)$, $\stg_W(\mathcal{A},w)$, $\stg_C(\mathcal{A})$, and $\stg_{MW}(\mathcal{A},m,w)$ with $\mathcal{A}$ any sub-multiset of $\mathcal{C}$ and $\eta$, $w$, and $m$ having values as above. This first stage is computable in time $O(k \cdot2^{k} \cdot n^{2})$ by iterating through all sub-multisets of $\mathcal{C}$ in non-decreasing order of cardinality.

\noindent \\ \textbf{Stage 2.} We will use the table of values computed in Stage 1, along with Lemma~\ref{lem:partcount}, to compute $\mps(\mathcal{C},\mathcal{M},\mathcal{W},\mathcal{N})$ as follows. First, let $\mathcal{B}' \subseteq \mathcal{B}$ denote the set of all nontrivial components in $\mathcal{B}$, and let $t$ denote the number of trivial $N$-shaped components in $\mathcal{B}$. As per Lemma~\ref{lem:partcount}, we consider each partition $\pi$ of $\mathcal{B}$, and examine $\stg(\pi_i)$ for each part $\pi_i$ of $\pi$. By Theorem~\ref{thm:invalidsort}, we can determine in time $O(k)$ if $\stg(\pi_{i}) = 0$. In such a case, the term of $\mps(\mathcal{B})$ corresponding to $\pi$ contributes $0$, and so we may discard $\pi$. Call a partition $\pi$ \emph{permissible} if $\stg(\pi_i)>0$ for all parts. Recall the parts $\pi_i$ from Theorem~\ref{thm:invalidsort} for which $\stg(\pi_i)=0$. Note that the only part $\pi_i$ with a combination of trivial and nontrivial components for which $\stg(\pi_i)>0$ is the case in which $\pi_{i}$ contains precisely nontrivial crowns and a single trivial $N$-shaped component. 
So the permissible partitions of $\mathcal{B}$ are precisely the partitions generated by the following procedure: 
\begin{itemize}
\item Start with a permissible partition $\pi'$ of $\mathcal{B}'$.
\item For at most $t$ parts $\pi_i'$ of $\pi'$ that each consist of only nontrivial crowns, add one of the $t$ trivial $N$-shaped components from $\mathcal{B}$ to $\pi_i'$. 

\item Each of the remaining trivial $N$-shaped components and each trivial crown from $\mathcal{B}$ become a singleton part. 
\end{itemize}

Let $\pi' = (\pi_{1}', \ldots, \pi_{p(\pi')}')$ be a permissible partition of $\mathcal{B}'$. We say that a permissible partition $\pi = (\pi_{1}, \ldots, \pi_{p(\pi)})$ of $\mathcal{B}$ \emph{extends} $\pi'$ if there exists an injective function $f : [p(\pi')] \to [p(\pi)]$ such that $\pi_{i}' \subseteq \pi_{f(i)}$ for all $i \in [p(\pi')]$. Note that by Theorem~\ref{thm:invalidsort}, $\pi_{i}'$ and $\pi_{f(i)}$ may only be different if $\pi_{i}'$ contains only nontrivial crowns, and $\pi_{f(i)}$ contains a nontrivial $N$-shaped component.

Note that while $\mathcal{B}'$ has at most $k$ components, $\mathcal{B}$ may in general not have bounded size. Fix a permissible partition $\pi'$ of $\mathcal{B}'$. Our goal is to compute:
\[
\sum_{\substack{\pi \in \Pi(\mathcal{B}) \\ \pi \text{ extends } \pi'}} \binom{d(\mathcal{B})}{d(\pi_{1}), d(\pi_{2}), \ldots, d(\pi_{p(\pi)})} \prod_{i=1}^{p(\pi)} \stg(\pi_{i}).
\]
As above, let $\pi'$ be a permissible partition of $\mathcal{B}'$, and let $\pi$ be an extension of $\pi'$. Recall, if $\pi_{i}$ consists of only one trivial component, then $\stg(\pi_{i}) = 1$. 
Let $\pi_{i}$ be a part of $\pi$ that contains only nontrivial crowns and a single trivial $N$-shaped component. Let $\pi_{i}'$ be obtained from $\pi_{i}$ by removing the trivial $N$-shaped component. We have by equation~(\ref{eq:SubsetDistance}) that $d(\pi_{i}') = d(\pi_{i})$.  Thus, if $\pi$ extends $\pi'$, we have that:
\[
\binom{d(\mathcal{B})}{d(\pi_{1}), d(\pi_{2}), \ldots, d(\pi_{p(\pi)})} = \binom{d(\mathcal{B}')}{d(\pi_{1}'), d(\pi_{2}'), \ldots, d\!\pb{\pi_{p(\pi')}'}}.
\]

Theorem~\ref{thm:invalidsort} and the above procedure for constructing permissible partitions yields precisely the following cases:

\begin{itemize}
\item \textbf{Case 1:} Suppose that $\pi_{i}'$ consists of an $N$-shaped component and zero or more nontrivial crowns. In this case, we can use a known value of $\stg_N$ to count the number of ways to sort this part.

\item \textbf{Case 2:} Suppose that $\pi_{i}'$  consists of a $W$-shaped component and zero or more nontrivial crowns. In this case, we can use a known value of $\stg_W$ to count the number of ways to sort this part.

\item \textbf{Case 3:} Suppose that  $\pi_{i}'$ consists of an $M$-shaped component and zero or more nontrivial crowns. In this case, we can also use a known value of $\stg_M$ to count the number of ways to sort this part.

\item \textbf{Case 4:} Suppose that $\pi_{i}'$ consists of an $M$-shaped component, a $W$-shaped component, and zero or more nontrivial crowns. In this case, we can use a known value of $\stg_{MW}$ to count the number of ways to sort this part.

\item \textbf{Case 5:}  If none of Cases 1-4 hold, then we necessarily have that $\pi_{i}'$ consists of only nontrivial crowns. Since the presence of such parts implies that $\pi'$ may not necessarily extend uniquely to a permissible partition of $\mathcal{B}$, we consider all such parts $\pi_i'$ simultaneously in order to handle all such permissible extensions of $\pi'$. Let $\pi_{i_{1}}', \ldots, \pi_{i_{\ell}}'$ be the parts of $\pi'$ that contain only nontrivial crowns. By Theorem~\ref{thm:invalidsort}, we have that for $j \in [\ell]$, we can only add at most one trivial $N$-shaped component to $\pi_{i_{j}}'$. As the trivial $N$-shaped components and the parts of $\pi'$ are both distinguishable, there are $P(t,v) = t! / (t-v)!$ ways to assign $v \leq t$ trivial $N$-shaped components to $v$ parts drawn from $\pi_{i_{1}}', \ldots, \pi_{i_{\ell}}'$. The remaining trivial $N$-shaped components each form their own singleton component in the corresponding partition $\pi$ of $\mathcal{B}$ extending $\pi'$. If part $\pi_{i_{j}}'$ ($j \in [\ell]$) has a trivial $N$-shaped component added, we can use a known value of $\stg_N$ to count the number of ways to sort this part. If not, we can use a known value of $\stg_C$ to count the number of ways to sort this part. The number of ways to sort each part is independent of the number of ways to assign the trivial $N$-shaped components. Thus, multiply the value returned from the lookup table by $P(t,v)$. Finally, using Lemma~\ref{lem:partcount}, we multiply together each of the $\stg_{N}$ and $\stg_{C}$ values that we accessed from the lookup table. This takes time $O(k)$ since there are $O(k)$ parts.

\noindent We now iterate over all integers $0\leq v\leq\min(t, \ell)$, where we consider all possible assignments of $v$ trivial $N$-shaped components to the parts of $\pi'$ (where the assignment is as described in the preceding paragraph). As each $\pi_{i_{j}}'$ can only receive at most one trivial $N$-shaped component, we iterate over all $v$-subsets of $[\ell]$ to fully enumerate each case. As $\ell \leq k$, there are at most $2^{k}$ total subsets to consider across all values of $v$, and so computing the count in this case takes time $O(k \cdot 2^{k})$.

\end{itemize}

In the first four cases, we only look at a single part at a time. Since there are $O(k)$ parts, Lemma~\ref{lem:partcount} gives a complexity of $O(k)$ for each of these cases. In Case~5 we already account for Lemma~\ref{lem:partcount}, so we have a complexity of $O(k\cdot 2^k$). Thus for all cases the runtime is at most $O(k\cdot 2^k)$ using the existing values in the lookup table to compute $\stg(\pi_i)$. As we are summing over all possible partitions, we are evaluating at most the number of partitions of a set of cardinality $k$, which is the $k^{\text{th}}$ Bell number $B_k$. This gives a complexity of $O(k\cdot 2^k\cdot B_k)$ for computing $\mps(\mathcal{C},\mathcal{M},\mathcal{W},\mathcal{N})$ from the pre-established lookup tables.

So, overall, we have a complexity of $O( k \cdot 2^{k} \cdot n^2 )$ for our first stage and a complexity of $O(k\cdot 2^k\cdot B_k)$ for our second stage. Thus, $\mps(\mathcal{C},\mathcal{M},\mathcal{W},\mathcal{N})$ is computable in time $O(k \cdot 2^k \cdot B_k+k \cdot 2^{k} \cdot n^2)$, as required. The result now follows. 
\end{proof}

\section{Uniform Sampling is Fixed-Parameter Tractable}

In this section, we exhibit a fixed-parameter tractable algorithm to sample sorting scenarios.

\begin{corollary} \label{cor:Sampling}
Let $G_1, G_2$ be genomes with $n$ genes, and let $A(G_1, G_2)$ be the adjacency graph (see Definition~\ref{def:adjgraph}). Let $k$ be the number of nontrivial components in $A(G_1, G_2)$. Then the problem of uniformly sampling  sorting scenarios transforming $G_1$ into $G_2$, is fixed-parameter tractable with respect to $k$. 
\end{corollary}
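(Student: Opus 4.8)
\emph{Proof plan.} The strategy is to feed the counting algorithm of Theorem~\ref{thm:MainFPT} into the standard self-reducible sampler, building a uniformly random sorting scenario one operation at a time. Maintain a current genome $H$, initially $G_1$, and repeat the following until $d(H,G_2)=0$ (in which case $A(H,G_2)$ is all trivial, $\mps$ of it is $1$, and we stop): first enumerate every sorting operation $\alpha$ applicable to $H$. By Observation~\ref{obs:SortingScenarios} each such $\alpha$ is specified by one or two components of $A(H,G_2)$ together with a choice of cut and/or join positions inside them, so there are only $\poly(n)$ of them and each, together with the genome $H_\alpha$ it produces, is listed in polynomial time. For each such $\alpha$ we call Theorem~\ref{thm:MainFPT} to compute $\mps(A(H_\alpha,G_2))$, then sample $\alpha$ with probability $\mps(A(H_\alpha,G_2))\big/\sum_{\beta}\mps(A(H_\beta,G_2))$, append $\alpha$ to the output, and set $H\leftarrow H_\alpha$. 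Since a scenario is a sorting scenario exactly when every one of its operations is a sorting operation, every most parsimonious scenario from $H$ to $G_2$ begins with some sorting operation $\alpha$, after which $d(H_\alpha,G_2)=d(H,G_2)-1$ and the tail is a most parsimonious scenario from $H_\alpha$ to $G_2$; conversely any such pair recombines into one. Hence $\mps(A(H,G_2))=\sum_{\beta}\mps(A(H_\beta,G_2))$, so the sampling probabilities are well defined, and a telescoping-product induction on $d(H,G_2)$ shows that each scenario $(\alpha_1,\dots,\alpha_d)$ is returned with probability exactly $1/\mps(A(G_1,G_2))$, i.e.\ uniformly.

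For the running time, a sorting scenario has length $d(G_1,G_2)\le n$, so the loop runs at most $n$ times and makes $\poly(n)$ oracle calls per iteration; since the number of edges is invariant along a scenario, everything reduces to bounding the \emph{parameter} --- the number of nontrivial components of the adjacency graph --- at every genome encountered. This is the part I expect to be the crux. It follows from Observation~\ref{obs:SortingScenarios} (equivalently, Figure~\ref{fig:sortingops}, Diagram (II)): no sorting operation ever produces an $M$-shaped component or a nontrivial crown, so along any scenario the number of $M$-shaped components is non-increasing; and inspecting operations (a)--(h), the only one that can increase the total number of nontrivial components --- and then only by exactly one --- is (d), the cut of an $M$-shaped component into two $N$-shaped components, which simultaneously decreases the number of $M$-shaped components by one. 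Consequently, if $k$ is the number of nontrivial components of $A(G_1,G_2)$, then operation (d) is applied at most $k$ times in any sorting scenario, so every adjacency graph reachable during the sampling has at most $2k$ nontrivial components.

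Putting the pieces together: each oracle call is on an adjacency graph with at most $2k$ nontrivial components and $n$ edges, hence runs in time $f(2k)\cdot\poly(n)$ for the function $f$ furnished by Theorem~\ref{thm:MainFPT}; multiplying by the $n\cdot\poly(n)$ calls, the sampler runs in time $f(2k)\cdot\poly(n)$, which is fixed-parameter tractable in $k$. Everything except the parameter bound $2k$ is the routine ``exact counting implies exact uniform sampling'' self-reduction (cf.~\cite{JERRUM1986169}), and the $\poly(n)$ enumeration of candidate first operations at each step is immediate from Observation~\ref{obs:SortingScenarios}; the one genuinely new ingredient is the observation, via Figure~\ref{fig:sortingops}, that the parameter can only grow through operation (d) and that (d) can occur at most $k$ times.
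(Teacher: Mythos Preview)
Your proof is correct and follows the same self-reducible sampling strategy as the paper: enumerate the $\poly(n)$ candidate first sorting operations, weight each by the $\mps$ of the resulting instance (computed via Theorem~\ref{thm:MainFPT}), sample accordingly, and recurse for $d(G_1,G_2)=O(n)$ rounds. The one substantive difference is your parameter bound along the scenario: the paper simply asserts, citing Observation~\ref{obs:SortingScenarios}, that every intermediate adjacency graph has at most $2^k$ nontrivial components, whereas you give a sharper and more transparent argument---only operation~(d) can increase the count, and it consumes an $M$-shaped component that can never be regenerated, so the count never exceeds $2k$. Your bound is tighter and the reasoning is cleaner; for the FPT conclusion either suffices.

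One small slip: your claim $d(G_1,G_2)\le n$ is not quite right (e.g.\ if $A(G_1,G_2)$ consists of $n/2$ crowns of size~$2$, then $d=3n/2$); the paper uses the cruder bound $d\le 2n$ obtained by cutting every wrong adjacency and then joining. This does not affect the argument, since all you need is $d=O(n)$.
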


Recall that if $d(G_1, G_2)$ is bounded, then so is the number of nontrivial components in the adjacency graph $A(G_1, G_2)$. So in light of Corollary~\ref{cor:Sampling}, we obtain a fixed-parameter tractable algorithm to sample sorting scenarios, with respect to distance.

\begin{proof}
Let $d := d(G_1, G_2)$. Let $S = (s_1, \ldots, s_{d})$ be a sorting scenario transforming $G_1$ into $G_2$, and let $S' = (s_1, \ldots, s_{d'})$ for some $d' \leq d$. Let $G_{1}'$ be the genome obtained by applying $S'$ to $G_1$. As there are at most $k$ nontrivial  components in $A(G_1, G_2)$ and $k$ is bounded, we have by Observation~\ref{obs:SortingScenarios} (see Figure~\ref{fig:sortingops}) that for any such $S'$, the number of nontrivial components in $A(G_{1}', G_2)$ is at most $2^k$.

We construct a uniform sampler that runs in time $f(k) \cdot \poly(n)$, for some function $f(k)$. We proceed as follows. For each sorting operation $t$, define $G_{1,t}$ to be the genome obtained by applying $t$ to $G_1$. We choose our initial sorting operation with probability $\mps(G_{1,t}, G_{2})/\mps(G_{1}, G_{2})$. By Theorem~\ref{thm:MainFPT}, we can solve \algprobm{Pairwise Rearrangement} for $G_{1,t}$ and $G_2$ in time $f(k) \cdot \poly(n)$, for some function $f(k)$. Genome $G_1$ has $O(n)$ adjacencies and $O(n)$ telomeres, so there are $O(n)$ possible cuts, $O(n)$ possible joins, and $O(n^2)$ possible cut-joins. So, there are $O(n^2)$ \emph{total} operations available, so $O(n^2)$ is also a bound on the number of sorting operations. Thus, in time $f(k) \cdot \poly(n)$, we can compute each $\mps(G_{1,t}, G_{2})/\mps(G_{1}, G_{2})$.

Now suppose we pick sorting operation $t'$ as our initial sorting operation. We now recurse, starting from $G_{1,t'}$ instead of $G_{1}$. Observe that $d(G_1, G_2) \leq 2n$; a genome with $n$ genes can have at most $n$ adjacencies (each gene has 2 extremities and the total $2n$ extremities could be paired into $n$ adjacencies). For distance, we could first cut all adjacencies that are in $G_1$ but not $G_2$. Then we can use joins to obtain all adjacencies which are in $G_2$ but not $G_1$. In particular, $d(G_1, G_2)$ is polynomial in $n$, and so our recursion depth is $\poly(n)$. The result now follows.
\end{proof}

\section{Conclusion}

We investigated the computational complexity of the \algprobm{Pairwise Rearrangement} problem in the Single Cut-and-Join model. In particular, while \algprobm{Pairwise Rearrangement} was previously shown to be $\textsf{\#P}$-complete under polynomial-time Turing reductions~\cite{bailey2023complexity}, we proved it is fixed-parameter tractable when parameterized by the number of nontrivial components in the adjacency graph (Theorem~\ref{thm:MainFPT}). In particular, our results show that the number of nontrivial components serves as a key barrier towards efficiently enumerating and sampling minimum length sorting scenarios. We also obtain a fixed-parameter tractable algorithm to uniformly sample sorting scenarios, with respect to the number of nontrivial components in the adjacency graph. We conclude with some open questions.

\begin{question}
In the Single Cut-and-Join model, does \algprobm{Pairwise Rearrangement} belong to \textsf{FPRAS}?    
\end{question}

\begin{question}
In the Double Cut-and-Join model, is \algprobm{Pairwise Rearrangement} $\textsf{\#P}$-complete?
\end{question}

The work of Braga and Stoye~\cite{BragaStoyeDCJ} immediately yields a fixed-parameter tractable algorithm for \algprobm{Pairwise Rearrangement} in the Double Cut-and-Join model. Their algorithm is considerably simpler than our work in this paper.

The computational complexity of the \algprobm{Pairwise Rearrangement} problem in the Reversal model is a long-standing open question. In particular, it is believed that this problem is $\textsf{\#P}$-complete. Furthermore, \algprobm{Pairwise Rearrangement} has been resistant to efficient sampling; membership in $\textsf{FPRAS}$ remains open. Thus, perhaps the lens of parameterized complexity might shed new light on this problem. The following question is a natural place to start. 

\begin{question}
In the Reversal model, is \algprobm{Pairwise Rearrangement} fixed-parameter tractable by the number of components in the adjacency graph?
\end{question}

\bibliographystyle{plainurl}
\bibliography{references}
\end{document}